\newtheorem{proposition}{Proposition}
\newtheorem{conj}{Conjecture}
\begin{document}
\title{Characterizing qubit channels in the context of quantum teleportation}

\author{Arkaprabha Ghosal}
\email{a.ghosal1993@gmail.com}
\affiliation{Centre for Astroparticle Physics and Space Science (CAPSS), 
Bose Institute, Block EN, Sector V, Salt Lake, Kolkata 700 091, India}

\author{Debarshi Das}
\email{dasdebarshi90@gmail.com}
\affiliation{S. N. Bose National Centre for Basic Sciences, Block JD, Sector III, Salt Lake, Kolkata 700 106, India}

\author{Subhashish Banerjee}
\email{subhashish@iitj.ac.in}
\affiliation{Indian Institute of Technology Jodhpur, Jodhpur 342 037, India}
\begin{abstract}
We consider a scenario where a party, say, Alice prepares a pure two-qubit (either maximally entangled or non-maximally entangled) state and sends one half of this state to another distant party, say, Bob through a qubit (either unital or non-unital) channel. Finally, the shared state is used as a teleportation channel. In this scenario, we focus on characterizing the set of qubit channels with respect to the final state's efficacy as a resource of quantum teleportation (QT) in terms of maximal average fidelity and fidelity deviation (fluctuation in fidelity values over the input states). Importantly, we point out the existence of a subset of qubit channels for which the final state becomes useful for universal QT (having maximal average fidelity strictly greater than the classical bound and having zero fidelity deviation) when the initially prepared state is either useful for universal QT (i.e., for a maximally entangled state) or not useful for universal QT (i.e., for a subset of non-maximally entangled pure states).  Interestingly, in the latter case, we show that non-unital channels (dissipative interactions) are more effective than unital channels (non-dissipative interactions) in producing useful states for universal QT from non-maximally entangled pure states.
\end{abstract}
\maketitle

\section{Introduction}
Developments in the understanding of quantum correlations  along with the superposition principle, have bench-marked advances in the field of quantum information.  Quantum entanglement is one of the most prominent of quantum correlations which is empowered due to the superposition principle. Entanglement plays a pivotal role in the success of a number of quantum information protocols, in particular, Quantum Teleportation (QT). 

QT \cite{teleb} can be realized as a strategy between two  spatially separated parties where a sender (say, Alice) transfers an unknown quantum state to the receiver (say, Bob) using Local Operations and Classical Communications (LOCC) and shared entanglement without any physical transmission of quantum systems. QT has played a crucial role in the
advancement of quantum communication. Motivated from QT, innumerable
quantum information theoretic and communication tasks ranging from quantum repeaters \cite{teleapp1}, quantum gate teleportation \cite{teleapp2} to measurement-based computing \cite{teleapp3} have been
proposed.  The idea of QT has been extended to multipartite systems \cite{mqt} and  continuous variable systems \cite{cqt}. Experimental demonstrations of   QT have been reported \cite{exp1,exp2,exp3} which also include QT over large distances \cite{exp4} or QT from ground to satellite  \cite{expgs}.

The standard figure of merit for QT is the average fidelity \cite{tele,can1,f1,f2}.  It represents  the average closeness or overlap  between the input state at the sender's  end and the output state at the receiver's end, where the average is taken over all possible input states. In case of  perfect QT, the output state is exactly equal to the input state for all possible input states. Maximally entangled states are the necessary resource for perfect QT \cite{teleb}. However, in reality one cannot expect maximally entangled states due to environmental interactions leading to imperfect QT, where the average fidelity is strictly less than one \cite{tele}.  In such cases, all input states may not be teleported equally well and   dispersion or fluctuation in fidelity over the input states may arise \cite{FD,fd1}. Although the average fidelity being the standard quantifier for QT, it does not contain any information about the fluctuation in fidelity or fidelity deviation \cite{fd1,FD,Opt,fd4,extra1,extra2}. Hence, average fidelity associated with fidelity deviation can completely characterize
QT.  The maximal average fidelity is the maximal value of average fidelity achievable over all possible local unitary operations within the standard  teleportation protocol \cite{tele}. On the other hand, fidelity deviation is something that one would like to minimize while keeping the average fidelity to the maximal possible value \cite{FD,Opt}.  Any two-qubit entangled state, for which the maximal fidelity is strictly greater than the classical bound, is known as an useful state for QT \cite{tele,can1}. A useful state for QT is called useful for Universal Quantum Teleportation (UQT) if and only if the state shows vanishing fidelity deviation \cite{FD}. In other words, a UQT implies that all input states can be teleported equally well with the same fidelity value equals to the maximal average fidelity. Hence, the concept of fidelity deviation can be used as a filter to select the optimal states for QT \cite{Opt}. 

Quantifying the success of QT based only an average fidelity has some  limitations in practical cases, for example, in the context of quantum circuit \cite{gate.dist, fd4,cite1} consisting of QT as an intermediate step. In such realistic scenarios, one may need to teleport a finite number of input states without considering all possible input states. In such cases, the knowledge of fidelity deviation plays a crucial role to estimate the individual fidelities for different input states. In particular, if the entangled channel is useful for UQT, then any input state is teleported with fidelity equal to the average fidelity.  


The necessary resource for QT is entanglement which must be shared between the sender and the receiver. Sharing of entanglement can be done by a simple process; Alice prepares an entangled pair in her lab and sends one half to Bob via a quantum channel. A perfect QT requires a maximally entangled state which can only be established via noiseless quantum channels.  In practical scenarios, the channels are noisy and, hence, these studies need to be considered by taking into account of the effects of environmental interactions. This can be done using the tools of open quantum systems \cite{breuerpet,sbbook}.

A quantum channel \cite{kraus,kraus2,choi} is a Completely Positive and Trace Preserving (CPTP) map $\Phi$ with the operator sum representation $\Lambda(\rho) = \sum_i K_i \rho K_i^{\dagger}$. Here $K_i$ are the Kraus  operators \cite{kraus,kraus2} obeying the completeness condition, $\sum_i K_i^{\dagger} K_i = I$. Any quantum  channel  is characterized  by  the  following  properties: (a) linearity, (b) Hermiticity preserving,  (c) positivity  preserving, and (d) trace preserving. The evolution modeled by a quantum channel could be, in general, unital or non-unital. 

 In the present study, we consider the scenario where Alice prepares either a maximally entangled two-qubit state or a non-maximally entangled two-qubit pure state and sends one half of it via a unital or non-unital qubit channel. The final state shared between Alice and Bob is used as the teleportation channel. Since the maximally entangled two-qubit states are useful for UQT \cite{FD}, in this case we find out the set of unital as well as non-unital channels for which the final state still remains useful for UQT. On the other hand, the non-maximally entangled two-qubit pure states being not useful for UQT \cite{FD}, we work out on finding the subset of unital and non-unital channels which convert the initial non-maximally entangled two-qubit pure states into useful states for UQT by acting on one half of the   states. Our results indicate that both unital and non-unital qubit channels can decrease the fidelity deviation (even can eliminate it completely).

Environmental interactions and the effects of quantum channels inevitably degrade the efficacy of a quantum resource. Hence, from foundational point of view as well as from information theoretic perspective it is important to find out the set of quantum channels that preserve the effectiveness of any quantum resource. QT being one of the primitive quantum information processing protocols, analyzing the set of quantum channels preserving the resources for QT is of paramount significance. Our present study is motivated to address this practical issue. Most importantly, our results effectively filter out the set of qubit channels which can be used in practical scenario for realizing UQT.

 The paper is organized as follows. Sec. \ref{sec2} is dedicated to the preliminary ideas and definitions useful for our paper.  In particular, we briefly discuss  the concept of concurrence for two-qubit states, the Hilbert-Schmidt representation and canonical representation of an arbitrary two-qubit state, the concept of maximal average fidelity and fidelity deviation for a two-qubit state, and the  qubit channels. In Sec. \ref{sec3}, we summarize the results obtained in this paper. Next, in Sec. \ref{sec4}, we present the results when the initial state is a Bell state followed by the results  in Sec. \ref{sec5} with non-maximally entangled two-qubit pure states as the initial states. An analysis of physically motivated noise models is made in Sec. \ref{sec6}. Finally, we conclude with a brief discussion in Sec. \ref{sec7}. Some of the technical details are relegated to two Appendices.

\section{Preliminaries} \label{sec2}
In this section, we will discuss the basic definitions and preliminaries of  quantum entanglement, Hilbert-Schmidt representation and canonical representation of a two-qubit state, maximal average fidelity and fidelity deviation in QT with a two-qubit state and qubit channels.

\subsection{Concurrence of a two-qubit state}

 Entanglement is a fundamental aspect of quantum correlation present in compound quantum systems. 
There exist a number of well-known measures of quantum entanglement. In the present paper, we restrict ourselves to the concurrence measure \cite{conc}. For a two qubit state $\rho $, the concurrence $C(\rho )$ is defined as \cite{conc},
\begin{equation}
C(\rho)= \max \lbrace 0,\lambda_{1}-\lambda_{2}-\lambda_{3}-\lambda_{4} \rbrace.\label{C1}
\end{equation} 
Here $\lambda_1 \geq \lambda_2 \geq \lambda_3 \geq \lambda_4$ are square roots of the eigenvalues of $\rho \tilde{\rho}$, where $\tilde{\rho}=(\sigma_{y}\otimes \sigma_{y})\rho^{*} (\sigma_{y}\otimes \sigma_{y})$ is the Pauli rotated state with $\sigma_y$ being the Pauli bit-phase flip matrix and $\rho^{*}$ being the complex conjugation of $\rho$ in the computational basis. 
\paragraph*{}
\subsection{Hilbert-Schmidt representation and the canonical form of a two-qubit state}
The Hilbert-Schmidt representation of a two-qubit density matrix $\rho$ is given by \cite{tele,hs1,hs2},
\begin{equation}
\rho = \frac{1}{4} \left[ \mathbf{I}_{4}+ \boldsymbol{R} \cdot \boldsymbol{\sigma} \otimes \mathbf{I}_{2} +\mathbf{I}_{2} \otimes \boldsymbol{S} \cdot \boldsymbol{\sigma} +\sum_{i,j=1}^{3}T_{ij}\sigma_{i}\otimes \sigma_{j}\right]. \label{HS}
\end{equation} 
The terms $\boldsymbol{R}$, $\boldsymbol{S}$ represent local vectors in $\mathcal{R}^{3}$ in each respective marginal   and $\boldsymbol{R} (\boldsymbol{S}) \cdot \boldsymbol{\sigma}$ = $\sum_{i=1}^{3} R_i (S_i) \sigma_i$ with $\sigma_i$ ($i=1,2,3$) being the Pauli matrix; $T_{ij}=\text{Tr}(\rho \sigma_{i}\otimes \sigma_{j})$ are the elements of the $3\times 3$ correlation matrix $T_{\rho}$, where $i,j=1,2,3$. Let $t_{11}$, $t_{22}$, $t_{33}$ are the eigenvalues of $T_{\rho}$. Then there always exits a product unitary operation $U_{1}\otimes U_{2}$ that transforms $\rho \rightarrow \rho_{C}$ such that \cite{hs1,hs2,can1}
\begin{eqnarray}
 \rho_{C} &=& (U_{1}\otimes U_{2})\rho (U_{1}\otimes U_{2})^{\dagger} \nonumber \\
 &=&\frac{1}{4} \left[ \mathbf{I}_{4}+\boldsymbol{r} \cdot \boldsymbol{\sigma} \otimes \mathbf{I}_{2} + \mathbf{I}_{2}\otimes \boldsymbol{s} \cdot \boldsymbol{\sigma} + \sum_{k=1}^{3} \lambda_k |t_{kk}| \, \sigma_{k} \otimes \sigma_{k}\right], \nonumber \\ \label{HSD}   \end{eqnarray}
with $\lambda_{k}\in \lbrace -1,+1\rbrace$ for $k=1,2,3$; $\boldsymbol{r}$, $\boldsymbol{s}$ represent local vectors in $\mathcal{R}^{3}$ in each respective marginal  and $\boldsymbol{r} (\boldsymbol{s}) \cdot \boldsymbol{\sigma}$ = $\sum_{i=1}^{3} r_i (s_i) \sigma_i$. Now, one can further choose the product unitary $U_{1}\otimes U_{2}$ such that (i) if det$(T_{\rho})\leq 0$, then $\lambda_k$ = $-1$ for $|t_{kk}| \neq 0$, $k=1,2,3$; (ii) if det$(T_{\rho}) > 0$, then $\lambda_i, \lambda_j = -1$, $\lambda_k = +1$ for $|t_{kk}| \neq 0$ for any choice of $i \neq j \neq k \in \{1,2,3\}$ with $|t_{ii}| \geq |t_{jj}| \geq |t_{kk}|$. This transformed $\rho_C$ is known as the canonical form of $\rho$ \cite{can2,FD}.

\subsection{Maximal average fidelity and fidelity deviation in QT with a two-qubit state}

Perfect QT is achieved if and only if the shared state is maximally entangled. In this case, the output state of QT is exactly equal to the input state. However, in practice, the available states are mixed entangled and hence,  QT will not be perfect. The standard figure of merit for QT is expressed through the concept of  average fidelity \cite{tele,can1,f1,f2}, which signifies the closeness between the input and the output states.

The average teleportation fidelity for a two-qubit state $\rho$ is defined as  \cite{tele}
\begin{equation}
\langle f_{\rho} \rangle =\int f_{\psi, \rho} d\psi,
     \label{av.f}
\end{equation}
where $f_{\psi , \rho}=\langle \psi |\chi |\psi \rangle$  is the fidelity between the input-output pair $(|\psi \rangle \langle \psi |, \chi)$. The above integration is taken over a uniform distribution of all possible pure qubit input states $|\psi\rangle$ (normalized Haar measure, $\int d \psi =1$).  In other words, this integration is over the parameters appearing in $|\psi\rangle$. In Bloch representation, an arbitrary pure qubit input state is given by $|\psi \rangle \langle \psi| = \frac{1}{2}(\mathbf{I}_2 + \boldsymbol{\hat{a}} \cdot \boldsymbol{\sigma})$, where the unit vector $\boldsymbol{\hat{a}}$ represents the Bloch vector of the input state and it is given by,  $\boldsymbol{\hat{a}} = (\sin \theta \cos \phi, \sin \theta \sin \phi, \cos \theta)$. With such parametrization of an arbitrary input state, we have $d \psi = \sin \theta d \theta d \phi$. Note that the average fidelity $\langle f_{\rho} \rangle$ is defined when the standard protocol for QT proposed by Bennett \textit{et al.} \cite{teleb} is adopted. As mentioned earlier, $\langle f_{\rho} \rangle = 1$ is possible if and only if $\rho$ is maximally entangled. 
\paragraph*{}
Fidelity deviation $\delta_{\rho}$ is a secondary quantifier of QT which measures fluctuations in fidelity  over the input states. It is defined as the standard deviation of fidelity values over all possible input states given by \cite{fd1,FD,Opt,fd4},
\begin{equation}
    \delta_{\rho}=\sqrt{\langle f_{\rho}^{2}\rangle -\langle f_{\rho}\rangle ^{2}}, \label{dev}
\end{equation} 
where $\langle f^2_{\rho} \rangle =\int f^2_{\psi, \rho} d\psi$, and $0\leq \delta_{\rho}\leq \frac{1}{2}$.

For a given two-qubit state $\rho$, the maximal average fidelity (or, maximal fidelity) $F_{\rho}$ is defined as the maximal value of average fidelity obtained over all strategies under the standard protocol and local unitary operations \cite{tele,can2},
\begin{equation}
   F_{\rho}=\max _{\text{LU}}\langle f_{\rho}\rangle,
\end{equation} 
where the maximization is done over all possible local unitary (LU) strategies. The protocol which maximizes the average fidelity is known to be the `optimal protocol'. Now, one can show that \cite{can2}
\begin{equation}
F_{\rho}= \langle f_{\rho_C}\rangle.
\label{optcan}
\end{equation}
The above Eq.(\ref{optcan}) indicates that an optimal protocol consists of two steps: (i) transforming $\rho \rightarrow \rho_{C}$ using an appropriate local unitary operation, and then (ii) using $\rho_C$ for QT following the standard protocol proposed by Bennett \textit{et al.} \cite{teleb}.

Since the primary motivation of QT is to maximize the average fidelity, fidelity deviation should be analyzed for optimal protocol. The fidelity deviation corresponding to the optimal protocol can be defined as \cite{FD,Opt}
\begin{equation}
\Delta_{\rho}=\delta_{\rho_C}.
\end{equation} 

A two-qubit state $\rho$ is useful for QT iff $F_{\rho}>\frac{2}{3}$ \cite{tele,can1}, where $\frac{2}{3}$ is the maximum average fidelity obtained in classical protocols. On the other hand, a two-qubit state $\rho$ is universal for QT iff $\Delta_{\rho}=0$ \cite{FD}. If a two-qubit state satisfies the above universality condition, then all input states will be teleported with the same fidelity. 

It has been shown earlier \cite{hs2,can2} that useful states for QT form a subset of the states with the property det$(T_{\rho}) < 0$. The analytical expressions of maximal fidelity and fidelity deviation for two-qubit states with det$(T_{\rho}) < 0$ can be written as \cite{can2,FD}
\begin{eqnarray}
F_{\rho}&=& \frac{1}{2}\left( 1+\frac{1}{3}\sum\limits_{i=1}^{3}|t_{ii}|\right), \nonumber \\
\Delta_{\rho}&=& \frac{1}{3\sqrt{10}}\sqrt{\sum_{i < j =1}^{3}(|t_{ii}|-|t_{jj}|)^{2}}.
\label{mf&dev}
\end{eqnarray}
From the above equations, it follows that a two-qubit state $\rho$ is useful  for Universal Quantum Teleportation (UQT) (i.e., useful and universal for QT) if and only if  $|t_{11}|=|t_{22}|=|t_{33}|>\frac{1}{3}$ \cite{FD}.

Hence, in order to theoretically determine usefulness and universality of a two-qubit state $\rho$ in the context of QT, we only need to find out the eigenvalues of the correlation matrix $T_{\rho}$, we don't need to find out the optimal protocol or the canonical form $\rho_C$.

\subsection{Qubit channels}

If any qubit state $\chi$ is passed through a channel $\Lambda$, the output state $\chi_{\Lambda}$ can be written as \cite{kraus,kraus2,choi},   
\begin{equation}
    \Lambda(\chi)=\chi_{\Lambda}=\sum_{i=0}^{r_{\Lambda}-1}K^{\Lambda}_{i}\chi K_{i}^{\Lambda^{\dagger}}.\label{qchannel}
\end{equation}
This type of representation is known as the operator sum representation or the Kraus representation \cite{kraus,kraus2} where $\lbrace K^{\Lambda}_{i}\rbrace$ are known as the Kraus operators. The Kraus operators always satisfy the completeness property given by,
\begin{equation}
\sum_{i=0}^{r_{\Lambda}-1} K_{i}^{\Lambda^{\dagger}} K^{\Lambda}_{i}=\mathbf{I}_2 , \label{complt}
\end{equation}
where the quantity $r_{\Lambda}$ represents the number of Kraus operators. 

In general, there is no unique representation of the Kraus operators 
 corresponding to a particular qubit channel \cite{kraus,kraus2,QC}. For a qubit channel $\Lambda$, one can find another Kraus representations $\lbrace \tilde{K}^{\Lambda}_{i} \rbrace $ related with $\lbrace K^{\Lambda}_{i}\rbrace$ by the relation \cite{omkarsinglequbit},
\begin{equation}
    \tilde{K}^{\Lambda}_{i}=\sum_{j}W_{ij}K^{\Lambda}_{j}, \label{Krr}
\end{equation}
where $W_{ij}$ is any unitary transformation such that $W_{ij}^{\dagger} W_{ij}$= $W_{ij} W_{ij}^{\dagger}$ = $\mathbf{I}_2$. Hence, there exist infinite number of possible Kraus representations for any given qubit channel $\Lambda$.

 Any qubit channel $\Lambda $ can be categorized in two classes- (i) unital class of channels $\Lambda_{u}$ and (ii) non-unital class of channel $\Lambda_{nu}$. Unital channels always preserve the identity operator, i.e., $\Lambda_{u} (\mathbf{I}_{2})=\mathbf{I}_{2}$. Whereas, for any non-unital channel $\Lambda_{nu}$ one has $\Lambda (\mathbf{I}_{2})\neq \mathbf{I}_{2}$.  Consequently, for any unital map $\Lambda_{u}$, the equality $\sum_{i}K^{\Lambda_u}_{i}K_{i}^{\Lambda_u^\dagger}=\mathbf{I}_{2}$ always holds. However, for any non-unital map $\Lambda_{nu}$, the above equality does not hold, i.e., $\sum_{i}K^{\Lambda_{nu}}_{i}K_{i}^{\Lambda_{nu}^\dagger} \neq \mathbf{I}_{2}$. 

It can be shown that any convex combination of any two unital qubit channels also represents a unital qubit channel \cite{ruskai}.  Hence, they form a convex set with four possible extreme points which are the four Pauli channels. In other words, any unital qubit channel can be expressed as a convex combination of four Pauli channels. The Kraus operators associated with the extreme points are $\{U \sigma_0 V, U \sigma_1 V, U \sigma_2 V, U \sigma_3 V \}$ \cite{ruskai}, where $U$ and $V$ are unitaries. The action of a unital qubit channel $\Lambda_u$ on the state $\chi$  can be expressed as \cite{QC,ruskai}
\begin{align}
\Lambda_u(\chi)&=\chi_{\Lambda_u} \nonumber \\
&=U \left(\sum_{i=0}^{3} p_i \sigma_i \left(V \chi V^{\dagger} \right) \sigma_i \right) U^{\dagger}, \quad 0 \leq p_i \leq 1 \, \forall \, i, .
\end{align}
with $\sum_{i=0}^{3} p_i =1$. 

When a qubit channel is realized with only one Kraus operator, then that channel must be unital \cite{ruskai}. In this case, the only Kraus operator will be unitary and such a channel is called a unitary channel.

In general, the number of Kraus operators $r_{\Lambda}$ has no specific upper bound. However, the lower bound on the number of operators, i.e., $r_{\Lambda}^{\text{min}}$  becomes important while representing any channel. The minimum number of Kraus operators for any given channel can be understood from the concept of Choi states. Let us consider a bipartite scenario where Alice prepares the Bell state $|\Phi_{1}\rangle$ and sends one half to Bob via any $\Lambda$. Here $|\Phi_1 \rangle$ is one of the states in Bell basis given by
\begin{align}
|\Phi_1 \rangle &= \frac{1}{\sqrt{2}} \left( |00 \rangle + |11 \rangle \right), \quad |\Phi_2 \rangle = \frac{1}{\sqrt{2}} \left( |01 \rangle + |10 \rangle \right),\nonumber \\
|\Phi_3 \rangle &= \frac{1}{\sqrt{2}} \left( |01 \rangle - |10 \rangle \right), \quad |\Phi_4 \rangle = \frac{1}{\sqrt{2}} \left( |00 \rangle - |11 \rangle \right).\label{bellstates}
\end{align}
The final state $\rho_{\Lambda,\Phi_{1}}$ shared between Alice and Bob after the channel interaction is known as the Choi state or dual state of $\Lambda$ given by \cite{choi,jam,choi2},
\begin{eqnarray}
\rho_{\Lambda,\Phi_{1}}&=&(\mathbf{I}\otimes \Lambda)|\Phi_{1}\rangle \langle \Phi_{1}| \nonumber \\
&=& \Phi_{\Lambda}~|\Phi_{1}\rangle \langle \Phi_{1}|.
\label{Choimap} 
\end{eqnarray}
A qubit channel $\Lambda$ is completely positive, if and only if its Choi state $\rho_{\Lambda,\Phi_{1}}$ is non-negative \cite{choi}. The trace-preserving condition of $\Lambda$ implies that  the marginal of Alice for $\rho_{\Lambda, \Phi_{1}}$ is always maximally mixed, i.e., $\text{Tr}_{2}(\rho_{\Lambda ,\Phi_{1}})=\mathbf{I}_{2}$.  Eq.(\ref{Choimap}) represents the Choi-Jamiolkowski isomorphism \cite{choi,jam} between a channel $\Lambda$ and its dual state $\rho_{\Lambda ,\Phi_{1}}$. Hence, it is obvious that the inherent geometry of the state $\rho_{\Lambda ,\Phi_{1}}\in \mathcal{L}(\mathbb{C}^{2}\otimes \mathbb{C}^{2})$ will be similar with the geometry of $\mathbf{I}\otimes \Lambda \in \mathcal{L}(\mathbb{C}^{2}\otimes \mathbb{C}^{2})$. 

 Let us describe the Hilbert-Schmidt decomposition of a Choi state. Up to unitary rotations, the state $\rho_{\Lambda,\Phi_{1}}$ can be written in the following canonical form \cite{QC,QC2,QC3},
\begin{eqnarray}
\rho_{\Lambda,\Phi_{1}}=\frac{1}{4}\left[ \mathbf{I}_{4}+\mathbf{I}_{2}\otimes \boldsymbol{s} \cdot \boldsymbol{\sigma} +\sum_{k}\lambda_{k}|t_{kk}| \, \sigma_{k}\otimes \sigma_{k}\right], \nonumber \\
\label{choistate}
\end{eqnarray}
where $\boldsymbol{s}$ $\equiv$ $(s_1, s_2, s_3)$ is the local vector at Bob's side and $\lambda_{k}|t_{kk}|$ are the eigenvalues of the correlation matrix $T_{\rho_{\Lambda,\Phi_{1}}}$  such that $\lambda_{k}\in \lbrace -1,+1 \rbrace$. The rank of the Choi state $\rho_{\Lambda,\Phi_{1}}$ is given by the the minimal number $r_{\Lambda}^{\text{min}}$ of Kraus operators associated with the channel $\Lambda$. For qubit channels, we have $1 \leq r_{\Lambda}^{\text{min}} \leq 4$.

For a unital channel $\Lambda_u$, up to local unitary rotations, the Choi state has the following Bell-diagonal form \cite{oneshot,ruskai}, 
\begin{eqnarray}
\rho_{\Lambda_u,\Phi_{1}}=\frac{1}{4}\left[ \mathbf{I}_{4}+\sum_{k}\lambda_{k}|t_{kk}| \,\sigma_{k}\otimes \sigma_{k}\right], \nonumber \\ 
\label{unitalchoistate}
\end{eqnarray}
which implies that $\text{Tr}_{1}(\rho_{\Lambda_u ,\Phi_{1}})=\mathbf{I}_{2}$ for any unital channel $\Lambda_u$. These Choi states form a convex set with maximally entangled states being the extreme points.
On the other hand, for a non-unital channel $\Lambda_{nu}$, up to unitary rotations, the Choi state $\rho_{\Lambda_{nu},\Phi_{1}}$ has the form given by Eq.(\ref{choistate}) with $\boldsymbol{s} \neq (0, 0, 0)$, i.e.,  $\text{Tr}_{1}(\rho_{\Lambda_{nu} ,\Phi_{1}}) \neq \mathbf{I}_{2}$.

Next, we will summarize the results obtained in this paper followed by detailed proofs and analysis of the results.

\section{Summary of the results} \label{sec3}
In the present study we consider two scenarios.
\paragraph*{}
\paragraph*{\textbf{Scenario 1:}} \textit{Alice prepares a maximally entangled two-qubit state and sends one half to Bob through a qubit channel.}

\paragraph*{}
Here the initial state is useful for UQT \cite{FD} and our aim is to find out the set of quantum channels for which the final state remains useful for UQT. The channel interaction can either be dissipative (non-unital) or non-dissipative (unital). In case of unital channels, we show that the final state is useful for UQT if and only if the channel is unitary for a particular single parameter channel associated with a rank-four Choi state. On the other hand, for non-unital channels, we find out that the final state remains useful for UQT if and only if the channel belongs to a strict subset  associated with rank-three and rank-four Choi states. In this case, we also derive the most general form of orthogonal Kraus operators associated with the non-unital channels that preserve the usefulness and universality.

Next, we consider another scenario: 
\paragraph*{}
\paragraph*{\textbf{Scenario 2:}} \textit{Alice prepares a pure non-maximally entangled two-qubit state and sends one half of it to Bob through a qubit channel.}

\paragraph*{}

In this case, the initial state is not useful for UQT (this state is useful for QT, but has non-vanishing fluctuation in fidelity) \cite{FD}. We want to find out whether there exists any quantum channel for which the final state becomes useful for UQT. When the channel is unital, we show that the final state is useful for UQT for a strict subset of unital channels if and only if the concurrence of the initial state is strictly greater than $\frac{1}{2}$. On the other hand, we demonstrate that the final state is useful for UQT for a strict subset of non-unital channels when the concurrence of the initial state is strictly greater than a critical value. The critical value in this case is less than $\dfrac{1}{2}$, implying an advantage of non-unital channels over unital interactions. Hence, these results indicate that the local interaction of unital as well as non-unital channels can eliminate the fluctuation in fidelity values.  Moreover, these results also point out that non-unital interactions are more effective than unital interactions in producing desirable states for UQT.

Finally, we supplement our studies with some quantum channels motivated from physical noise models.

\section{ Bell state as the initial state} \label{sec4}

Here we consider the  scenario 1, where Alice prepares a two-qubit Bell state $|\Phi_{1}\rangle = \frac{1}{\sqrt{2}}\left( |00\rangle +|11\rangle \right)$ and sends half of this state to Bob through a qubit channel $\Lambda$. Hence, in this case, the initially prepared state is useful  for UQT \cite{FD}. Here our goal is to find out the class of qubit channels for which the final state will also be useful  for UQT. That is, we want to characterize the qubit channels that preserve usefulness and universality in the context of QT. 

We will start by analyzing the rank of Choi states associated with the qubit channels $\Lambda$ that preserve usefulness and universality. 

\begin{proposition}
If Alice sends one half of a Bell state $|\Phi_1\rangle$ through any qubit channel $\Lambda^1$ associated with rank-one Choi state, then the final shared state will always be useful for UQT.
\label{prop1}
\end{proposition}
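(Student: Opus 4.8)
The plan is to show that a rank-one Choi state forces $\Lambda^1$ to be a unitary channel, so that the final shared state is merely a locally rotated Bell state and therefore stays maximally entangled. As noted after Eq.~(\ref{choistate}), the rank of $\rho_{\Lambda^1,\Phi_1}$ equals the minimal number of Kraus operators $r_{\Lambda^1}^{\text{min}}$; hence a rank-one Choi state means $\Lambda^1$ is realized by a single Kraus operator $K_0$. The completeness relation~(\ref{complt}) then reads $K_0^{\dagger}K_0=\mathbf{I}_2$, and since $K_0$ is a square $2\times 2$ matrix this already implies $K_0K_0^{\dagger}=\mathbf{I}_2$, i.e. $K_0=:U$ is unitary. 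Thus $\Lambda^1(\chi)=U\chi U^{\dagger}$, consistent with the remark in Sec.~\ref{sec2} that a one-Kraus-operator qubit channel is necessarily unitary.

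Next I would write the final state explicitly. By Eq.~(\ref{Choimap}), $\rho_{\Lambda^1,\Phi_1}=(\mathbf{I}\otimes U)|\Phi_1\rangle\langle\Phi_1|(\mathbf{I}\otimes U)^{\dagger}$, a pure state obtained from the maximally entangled $|\Phi_1\rangle$ by the local unitary $\mathbf{I}\otimes U$. Since a local unitary preserves the Schmidt coefficients, the final state remains maximally entangled. Equivalently, at the level of the correlation matrix, conjugation by $\mathbf{I}\otimes U$ sends $T_{\rho}\mapsto T_{\rho}\,O_U$ with $O_U\in SO(3)$ the rotation that $U$ induces on the Bloch sphere, so starting from $T_{|\Phi_1\rangle\langle\Phi_1|}=\mathrm{diag}(1,-1,1)$ only a right orthogonal rotation is applied.

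I would then read off the two quantities that decide usefulness and universality. A right multiplication by $O_U\in SO(3)$ leaves $T_{\rho}T_{\rho}^{T}$ invariant, so the three singular values of $T_{\rho_{\Lambda^1,\Phi_1}}$ are all equal to $1$, giving $|t_{11}|=|t_{22}|=|t_{33}|=1$; moreover $\det T_{\rho_{\Lambda^1,\Phi_1}}=\det(\mathrm{diag}(1,-1,1))\det O_U=-1<0$, so the analytic formulas~(\ref{mf&dev}) are applicable. Substituting yields $F_{\rho_{\Lambda^1,\Phi_1}}=\tfrac12\left(1+\tfrac13\cdot 3\right)=1>\tfrac23$ and $\Delta_{\rho_{\Lambda^1,\Phi_1}}=0$, so by the criterion $|t_{11}|=|t_{22}|=|t_{33}|>\tfrac13$ stated after~(\ref{mf&dev}) the final state is useful for UQT.

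I do not expect a genuine obstacle: the argument is a short chain of standard facts. The two points deserving a line of care are (i) the passage from $K_0^{\dagger}K_0=\mathbf{I}_2$ to $K_0$ unitary, which uses that $K_0$ is square (so $K_0^{\dagger}=K_0^{-1}$), and (ii) the fact that the quantities $|t_{ii}|$ entering~(\ref{mf&dev}) are the singular values of $T_{\rho}$, which are invariant under the local rotation $O_U$ — this is precisely what makes the channel's unitary action harmless for usefulness and universality.
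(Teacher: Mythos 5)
Your proof is correct and follows essentially the same route as the paper: a rank-one Choi state forces a single Kraus operator, which is necessarily unitary, so the final state is a locally rotated Bell state, hence maximally entangled and useful for UQT. The only difference is that you verify the unitarity (from $K_0^{\dagger}K_0=\mathbf{I}_2$ for a square matrix) and the UQT criterion (via the singular values and determinant of the correlation matrix) explicitly, where the paper simply cites the corresponding references.
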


\begin{proof}
Any quantum channel $\Lambda^1$ associated with rank-one Choi state can be implemented with only one Kraus operator and that Kraus operator must be a unitary operator \cite{ruskai}. Hence, sending one half of a Bell state $|\Phi_1\rangle$ through any quantum channel $\Lambda^1$ associated with rank-one Choi state is equivalent to applying local unitary operation on the Bell state. The final shared state, therefore, will be a maximally entangled state, which is useful for UQT \cite{FD}.
\end{proof}

\begin{proposition}
If Alice sends one half of a Bell state $|\Phi_1\rangle$ through any qubit channel $\Lambda^2$ associated with rank-two Choi state, then the final shared state will never be useful for UQT.
\label{prop2}
\end{proposition}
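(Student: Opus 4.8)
The plan is to argue by contradiction. Sending one half of $|\Phi_1\rangle$ through $\Lambda^2$ produces exactly the Choi state $\rho_{\Lambda^2,\Phi_{1}}$ of Eq.(\ref{Choimap}), which by hypothesis has rank two. Suppose this final state were useful for UQT. By the criterion following Eq.(\ref{mf&dev}) this means $|t_{11}|=|t_{22}|=|t_{33}|=t>\tfrac13$, and since any UQT-useful state is in particular useful for QT it obeys $\det(T_{\rho})<0$. Because both the rank and the invariants $|t_{kk}|$ are unchanged by local unitaries, I may replace $\rho_{\Lambda^2,\Phi_{1}}$ by its canonical form Eq.(\ref{choistate}); the sign convention recorded below Eq.(\ref{HSD}) then forces $\lambda_1=\lambda_2=\lambda_3=-1$, so the three correlation coefficients $c_k\equiv\lambda_k|t_{kk}|$ coincide, $c_1=c_2=c_3=-t$.

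The key step is to read off the matrix of this canonical Choi state in the computational basis $\{|00\rangle,|01\rangle,|10\rangle,|11\rangle\}$. Writing $s_\pm=s_1\pm i s_2$, one finds
\begin{equation}
4\rho=\begin{pmatrix}
1-t+s_3 & s_- & 0 & 0\\
s_+ & 1+t-s_3 & -2t & 0\\
0 & -2t & 1+t+s_3 & s_-\\
0 & 0 & s_+ & 1-t-s_3
\end{pmatrix}.
\label{choimat}
\end{equation}
The decisive structural fact, which is precisely what rank two must be tested against, is that equal correlation coefficients $c_1=c_2$ annihilate the outer $(|00\rangle,|11\rangle)$ coupling, while the transverse Bloch components $s_1,s_2$ connect only nearest neighbours; hence $4\rho$ is tridiagonal.

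I would then split into two cases according to the transverse part of the local vector $\boldsymbol{s}$. If $s_1^2+s_2^2\neq0$, then $s_\pm\neq0$ and, together with $-2t\neq0$, every off-diagonal entry of the Hermitian tridiagonal matrix Eq.(\ref{choimat}) is nonzero; such an irreducible Jacobi matrix has a simple spectrum, so $\rho$ has four distinct eigenvalues and therefore at most one vanishing eigenvalue, forcing $\mathrm{rank}(\rho)\geq3$ and contradicting rank two. If instead $s_1=s_2=0$, the matrix block-diagonalizes into the diagonal entries $\tfrac14(1-t\pm s_3)$ and a $2\times2$ block with eigenvalues $\tfrac14\big(1+t\pm\sqrt{s_3^2+4t^2}\big)$. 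The eigenvalue $\tfrac14\big(1+t+\sqrt{s_3^2+4t^2}\big)$ is strictly positive, so at most three eigenvalues can vanish; imposing that any two of the remaining three be zero yields only $t=0$ or $t=1$, and $t=1$ collapses $\rho$ to a rank-one maximally entangled state. Thus for $\tfrac13<t<1$ no two eigenvalues vanish and again $\mathrm{rank}(\rho)\neq2$.

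In either case the assumption that a rank-two Choi state is useful for UQT is untenable, which proves the statement. I expect the main obstacle to be the structural observation pinning the canonical Choi state to tridiagonal form — this is where the canonical sign convention ($c_1=c_2=c_3=-t$) is essential, and everything downstream (the simple spectrum of irreducible Jacobi matrices in one case, and the short explicit eigenvalue computation in the other) is routine. A secondary point worth stating carefully is that passing to the canonical form Eq.(\ref{choistate}) preserves the maximally mixed marginal $\mathrm{Tr}_2(\rho)=\mathbf{I}_2$, since local rotations act as $T\mapsto O_1 T O_2^{\top}$ and $\boldsymbol{s}\mapsto O_2\boldsymbol{s}$ while leaving Alice's vanishing local vector untouched.
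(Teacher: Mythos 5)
Your proof is correct, but it runs in the opposite direction from the paper's and uses different machinery. The paper starts from the rank-two hypothesis: it imports the characterization of rank-two Choi states from Verstraete and Verschelde \cite{QC} (vanishing $3\times 3$ minors force all but one $\tilde{s}_k$ to vanish and give $\tilde{t}_{33}=-\tilde{t}_{11}\tilde{t}_{22}$, parameterized as $\tilde{t}_{11}=\cos\alpha$, $\tilde{t}_{22}=\cos\beta$), and then checks that the UQT condition $|\tilde{t}_{11}|=|\tilde{t}_{22}|=|\tilde{t}_{33}|>\tfrac{1}{3}$ forces $|\cos\alpha|=|\cos\beta|=1$, i.e.\ a rank-one state. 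You instead start from the UQT hypothesis, which pins the canonical form to $c_1=c_2=c_3=-t$ with $t>\tfrac{1}{3}$, and rule out rank two spectrally: the irreducible-Jacobi-matrix observation disposes of the generic case $s_1^2+s_2^2\neq 0$ with no eigenvalue computation at all (every off-diagonal entry of the Hermitian tridiagonal matrix is nonzero, so the spectrum is simple and at most one eigenvalue can vanish), while the degenerate case $s_1=s_2=0$ reduces to a $2\times 2$ block whose double-zero solutions occur only at $t\in\{0,1\}$, with $t=1$ giving a rank-one state. Your route is more self-contained, since it does not rely on the external rank-two minor conditions, at the cost of a two-case split. It is worth noting that your calculation is essentially subsumed by the paper's own Appendix \ref{app1}: the eigenvalues $q_i$ listed there for arbitrary $\boldsymbol{s}$ (with $c_k=-t$) would give a unified one-line version of both of your cases, since rank two requires $q_2=q_3=0$, i.e.\ $1+t=\sqrt{|\boldsymbol{s}|^2+4t^2}$ and $|\boldsymbol{s}|=1-t$, which again forces $t\in\{0,1\}$ and hence rank one. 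The bookkeeping points you flag — preservation of Alice's maximally mixed marginal and of the $|t_{kk}|$ under local unitaries, and the sign convention yielding $c_k=-t$ — are handled correctly and are consistent with the paper's conventions.
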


\begin{proof}
This proof mainly follows from results presented in \cite{QC}. Let Alice prepares a Bell state $|\Phi_1\rangle$ and sends one half of that state through any qubit channel $\Lambda^2$. Then the finally shared two-qubit state $\rho^f$ is nothing but the Choi state $\rho_{\Lambda^2, \Phi_{1}}$ of the channel, i.e.,
\begin{widetext}
\begin{align}
  \rho^f =  \rho_{\Lambda^2, \Phi_{1}} 
  =\frac{1}{4}\left( \begin{array}{cccc}
1+\tilde{s}_{3}+\tilde{t}_{33} & \tilde{s}_{1}-i\tilde{s}_{2} & 0 & \tilde{t}_{11}-\tilde{t}_{22} \\
\tilde{s}_{1}+i\tilde{s}_{2} & 1-\tilde{s}_{3}-\tilde{t}_{33} & \tilde{t}_{11}+\tilde{t}_{22} & 0 \\ 
0 & \tilde{t}_{11}+\tilde{t}_{22} & 1+\tilde{s}_{3}-\tilde{t}_{33} & \tilde{s}_{1}-i\tilde{s}_{2} \\
\tilde{t}_{11}-\tilde{t}_{22} & 0 & \tilde{s}_{1}+i\tilde{s}_{2} & 1-\tilde{s}_{3}+\tilde{t}_{33}
\end{array}
\right),
\label{ChoiM}
\end{align}
\end{widetext}
where the above matrix is written in the computational basis $\{|00\rangle, |01\rangle, |10\rangle, |11 \rangle\}$. Henceforth, all $4 \times 4$ matrices will be written in this basis.

Since the channel $\Lambda^2$ is associated with rank-two Choi state, rank of $\rho^f$ will be two. Hence, linear combinations of $3\times 3$ minors of $\rho^f$ must be zero, which implies the following three conditions \cite{QC},
\begin{eqnarray}
\tilde{s}_{3}\left( \tilde{t}_{33}+\tilde{t}_{11}\tilde{t}_{22}\right)&=&0,  \nonumber \\
\tilde{s}_{2}\left( \tilde{t}_{22}+\tilde{t}_{11}\tilde{t}_{33}\right)&=&0,  \nonumber \\
\tilde{s}_{1}\left( \tilde{t}_{11}+\tilde{t}_{22}\tilde{t}_{33}\right)&=&0.
\label{rank2cnd}
\end{eqnarray}
These conditions, together with the fact that diagonal elements of a positive semi-definite matrix are always greater than the elements in the same column, lead to the conclusion that all $\tilde{s}_{k}$ but one have to be equal to zero if $\rho^f$ is rank-two \cite{QC}. Without loss of generality, one can choose $\tilde{s}_{1}$ = $\tilde{s}_{2}$ = $0$ and parameterize $\tilde{t}_{11}  = \cos \alpha$, $\tilde{t}_{22}  = \cos \beta$. Hence, we have $\tilde{t}_{33} = -\cos \alpha \cos \beta$. 
The state $\rho^f$ will be useful for UQT if and only if $|\tilde{t}_{11}|$ = $|\tilde{t}_{22}|$ = $|\tilde{t}_{33}|$ $>$ $\frac{1}{3}$, where $\tilde{t}_{11} = \cos \alpha$, $\tilde{t}_{22} = \cos \beta$, $\tilde{t}_{33} = -\cos \alpha \cos 
\beta$. This will be satisfied if and only if $|\cos \alpha|   = |\cos \beta| = 1$. But this implies that the state $\rho^f$ is rank-one. Hence proved.
\end{proof}

Next, we provide an example which supports Proposition \ref{prop2}. Let us consider the dephasing channel $\Lambda_{\text{dephasing}}$ with the following Kraus operators: $K^{\Lambda_{\text{dephasing}}}_0 = \sqrt{p}~ \mathbb{I}$, $K^{\Lambda_{\text{dephasing}}}_1= \sqrt{1-p} ~\sigma_{3}$ with $0< p < 1$. Note that this channel is unital as $\sum\limits_{i=0}^{1} \left(K^{\Lambda_{\text{dephasing}}}_i\right) \left(K^{\Lambda_{\text{dephasing}}}_i\right)^{\dagger}= \mathbf{I}_2$. The final shared state between Alice and Bob in this case is given by,
\begin{align}
\rho^f_{\text{dephasing}} & = \sum_{i=0}^{1} (\mathbf{I}\otimes K^{\Lambda_{\text{dephasing}}}_{i}) |\Phi_1 \rangle \langle \Phi_1 | (\mathbf{I}\otimes K_{i}^{\Lambda_{\text{dephasing}}^{\dagger}}) \nonumber \\
&=p|\Phi_{1}\rangle \langle \Phi_{1}|+(1-p)|\Phi_{4}\rangle \langle \Phi_{4}|.
\end{align}
The maximal fidelity and fidelity deviation for $\rho^f_{\text{dephasing}}$ is given by,
\begin{eqnarray}
F_{\rho^f_{\text{dephasing}}}&=& \left\lbrace \begin{array}{cc}
    \dfrac{2p+1}{3}>\dfrac{2}{3} & \quad \text{when} \, \, \dfrac{1}{2}< p < 1, \\
    \\
    \dfrac{2}{3} & \quad \text{when} \, \, p= \dfrac{1}{2},\\
    \\
    1-\dfrac{2p}{3}>\dfrac{2}{3} & \quad \text{when} \, \, 0< p < \dfrac{1}{2},
\end{array}  
\right. \nonumber \\
\nonumber \\
\Delta_{\rho^f_{\text{dephasing}}}&=& \left\lbrace \begin{array}{cc}
    \dfrac{2(1-p)}{3\sqrt{5}}\neq 0 & \quad \text{when} \, \, \dfrac{1}{2}< p < 1, \\
    \\
    \dfrac{1}{3\sqrt{5}}\neq 0 & \quad \text{when} \, \,  p = \dfrac{1}{2},\\
    \\
    \dfrac{2p}{3\sqrt{5}}\neq 0 & \quad \text{when} \, \, 0< p < \dfrac{1}{2}.
\end{array}  
\right. \nonumber
\end{eqnarray}
Hence, the final shared state in this case is useful, but not universal for QT.

\begin{proposition}
If Alice sends one half of a Bell state $|\Phi_1\rangle$ through any qubit channel $\Lambda^3$ associated with rank-three Choi state, then the final shared state will be useful for UQT when $\Lambda^3$ belongs to a strict subset of all qubit channels  associated with rank-three Choi states.
\label{prop3}
\end{proposition}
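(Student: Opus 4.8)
The plan is to exploit, exactly as in Proposition~\ref{prop2}, the fact that for a Bell-state input the final shared state coincides with the Choi state $\rho_{\Lambda^3,\Phi_1}$, whose canonical form is the one displayed in Eq.~(\ref{ChoiM}) with correlation entries $\tilde t_{11},\tilde t_{22},\tilde t_{33}$ and Bob's local vector $\tilde{\boldsymbol s}=(\tilde s_1,\tilde s_2,\tilde s_3)$. By the useful-for-UQT criterion, $\rho^f$ is useful for UQT if and only if it belongs to the $\det(T_\rho)<0$ class and satisfies $|\tilde t_{11}|=|\tilde t_{22}|=|\tilde t_{33}|=t>\tfrac13$, while being rank three means $\rho^f$ is positive semi-definite with exactly one vanishing eigenvalue. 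Thus the entire question reduces to whether the locus cut out by the threefold equality meets the rank-three stratum of the Choi cone, and, if so, whether it does so only on a proper subset. The strategy is therefore to \emph{impose} the UQT equalities first and then search within them for a positive semi-definite matrix of rank exactly three.

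First I would fix $|\tilde t_{ii}|=t$ and choose the signs so that $\tilde t_{11}\tilde t_{22}\tilde t_{33}<0$ (forced by $\det(T_\rho)<0$), the simplest representative being $\tilde t_{11}=\tilde t_{22}=\tilde t_{33}=-t$. With this choice the entry $\tilde t_{11}-\tilde t_{22}$ in Eq.~(\ref{ChoiM}) vanishes, and taking $\tilde s_1=\tilde s_2=0$ decouples $\rho^f$ into a diagonal $2\times2$ block on $\{|00\rangle,|11\rangle\}$ with entries proportional to $1+\tilde s_3-t$ and $1-\tilde s_3-t$, and a $2\times2$ block on $\{|01\rangle,|10\rangle\}$ with off-diagonal $-2t$. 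The natural way to manufacture the single required zero eigenvalue is then to switch on the non-unital component and set $\tilde s_3=1-t$, which annihilates one diagonal entry of the first block.

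Next I would verify that this yields a legitimate rank-three channel. With $\tilde s_3=1-t$ the surviving eigenvalue of the diagonal block is proportional to $2(1-t)>0$, while the off-diagonal block has trace proportional to $2(1+t)$ and determinant proportional to $4t(1-t)>0$, so both of its eigenvalues are strictly positive for $t\in(\tfrac13,1)$. Hence $\rho^f$ is positive semi-definite with a single zero eigenvalue, i.e. genuinely rank three, and by construction useful for UQT; this exhibits a one-parameter family inside the rank-three stratum and establishes non-emptiness. Note that $\tilde{\boldsymbol s}\neq(0,0,0)$, so every such channel is non-unital. This is consistent with the unital case: a Bell-diagonal ($\tilde{\boldsymbol s}=0$) state with $|\tilde t_{ii}|=t\in(\tfrac13,1)$ has eigenvalues $\tfrac14(1\pm t\pm t\pm t)$, none of which vanish, so it is automatically rank four; the zero eigenvalue demanded by rank three is supplied precisely by the non-unital shift.

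Finally, strictness is immediate, since any rank-three Choi state with $|\tilde t_{11}|\neq|\tilde t_{22}|$, or with common value $t\le\tfrac13$, fails the UQT criterion, so the preserving channels form a proper subset of all rank-three channels. The main obstacle is not the construction but the simultaneous bookkeeping: guaranteeing positive semi-definiteness, rank exactly three (one and only one zero eigenvalue), and the equality $|\tilde t_{11}|=|\tilde t_{22}|=|\tilde t_{33}|>\tfrac13$ all at once. The block decomposition above is what makes this tractable, and a complete argument would also scan the remaining admissible sign assignments of $\tilde t_{ii}$ and the possibility $\tilde s_1,\tilde s_2\neq0$ to confirm that no essentially different rank-three family is missed and to pin down the subset exactly.
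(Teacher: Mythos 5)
Your proposal is correct, but it proves the proposition by a genuinely different route than the paper. The paper's own proof is purely by example at the level of Kraus operators: it exhibits one non-unital rank-three channel, Eq.~(\ref{nonkrausex}), whose output state is useful for UQT precisely when $\frac13<p<1$, and one unital rank-three Pauli channel ($\sqrt{p_0}\,\mathbf{I},\sqrt{p_1}\,\sigma_1,\sqrt{p_2}\,\sigma_2$) whose output is never useful for UQT; together these give non-emptiness and strictness. You instead work directly on the Choi state: imposing $\tilde t_{11}=\tilde t_{22}=\tilde t_{33}=-t$ and $\tilde s_1=\tilde s_2=0$ in Eq.~(\ref{ChoiM}), decoupling into $2\times2$ blocks, and switching on the non-unital shift $\tilde s_3=1-t$ to manufacture exactly one zero eigenvalue, then verifying positivity of the remaining block via its trace $\propto 2(1+t)$ and determinant $\propto 4t(1-t)$. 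This is sound (your condition $\tilde s_3=|\boldsymbol{s}|=1-t$ is in fact exactly the rank-three condition the paper derives later in Appendix~\ref{app1}, Eq.~(\ref{rankthree})), and your strictness argument works because the same family with $t\le\frac13$ remains a valid rank-three Choi state but fails usefulness. What your approach buys is structural insight the paper's Proposition~\ref{prop3} proof does not give: you show that the zero eigenvalue demanded by rank three can only come from the non-unital shift, i.e.\ \emph{every} UQT-preserving rank-three channel is non-unital, a fact the paper only obtains later (Proposition~6 and Appendix~\ref{app1}). What the paper's approach buys is concreteness: explicit Kraus operators that realize the channel, whereas you implicitly rely on Choi's theorem to pass from your positive semi-definite matrix back to a CPTP map --- legitimate here because the form (\ref{ChoiM}) has Alice's marginal maximally mixed, but worth one sentence in a complete write-up.
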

\begin{proof}
We will prove this proposition by presenting two examples. Let us consider the qubit channel $\Lambda^3$ associated with a rank three Choi state with the following Kraus operators,
\begin{eqnarray}
&& K^{\Lambda^3}_{0}=\left( 
\begin{array}{cc}
\sqrt{1-p} & 0 \\
0 & 0
\end{array}
\right), \nonumber \\
&&K^{\Lambda^3}_{1}=\left( 
\begin{array}{cc}
0 & \sqrt{1-p} \\
0 & 0
\end{array}
\right), \nonumber \\
&& K^{\Lambda^3}_{2}=\left( 
\begin{array}{cc}
\sqrt{p} & 0 \\
0 & \sqrt{p}
\end{array}
\right),
\label{nonkrausex}
\end{eqnarray}
with $0 < p < 1$. The above matrices are written in the basis $\{|0\rangle, |1\rangle \}$. Henceforth, all $2 \times 2$ matrices will be expressed in this basis. Here one should note that  $\sum_{i} (K^{\Lambda^3}_{i}) (K^{\Lambda^3}_{i})^{\dagger}\neq \mathbf{I}$, which implies that the above channel is non-unital. In this case, the final shared state between Alice and Bob is given by,
\begin{eqnarray}
\rho^f_{\Lambda^3}&=&\sum_{i=0}^{2}(\mathbf{I}\otimes K^{\Lambda^3}_{i})|\Phi_{1}\rangle \langle \Phi_{1}|(\mathbf{I}\otimes K^{\Lambda^{3^\dagger}}_{i}) \nonumber \\
&=& p|\Phi_{1}\rangle \langle \Phi_{1}|+(1-p)\frac{\mathbf{I}_{2}}{2}\otimes |0\rangle \langle 0|.
\end{eqnarray}
One can easily check that the maximal average fidelity $F_{\rho^f_{\Lambda^3}}$ and fidelity deviation $\Delta_{\rho^f_{\Lambda^3}}$ of the above state are given by, 
\begin{eqnarray}
F_{\rho^f_{\Lambda^3}}&=&\frac{1+p}{2} \nonumber \\
&>&\frac{2}{3} \quad \text{when} \, \, \frac{1}{3}<p<1, \nonumber \\
\Delta_{\rho^f_{\Lambda^3}}&=& 0 \quad ~~ \forall \quad p \in (0,1). \nonumber
\end{eqnarray}
Hence, in this case, the final shared state is useful for UQT for a particular range of the channel parameter. This particular example shows that in the whole set of qubit channels associated with rank-three Choi states, for a subset of channels the final state will be useful for UQT. Next, we will now show that this subset is strict by giving another example where the final state is useful but not universal. 

Let us consider the qubit (unital) channel $\tilde{\Lambda}^3$ associated with a rank three Choi state with the following Kraus operators, 
\begin{align}
K^{\tilde{\Lambda}^3}_{0}=\sqrt{p_{0}}~\mathbf{I},~~K^{\tilde{\Lambda}^3}_{1}=\sqrt{p_{1}}~\sigma_{1}, ~~K^{\tilde{\Lambda}^3}_{2}=\sqrt{p_{2}}~\sigma_{2}
\end{align}
where $\sum\limits_{i=0}^{2} p_{i}=1$ and $1>p_0 \geq p_1 \geq p_2>0$. The final shared state between Alice and Bob in this case is given by,
\begin{align}
    \rho^f_{\tilde{\Lambda}^3}=&\sum_{i=0}^{2}p_{i}|\Phi_{i+1}\rangle \langle \Phi_{i+1}|.
\end{align}
When $0 < p_0 \leq \frac{1}{2}$, the above final state is not entangled and, hence, is not useful for UQT \cite{FD}. When $\frac{1}{2} < p_0 < 1$, the maximal average fidelity and fidelity deviation for this state are given by, 
\begin{eqnarray}
F_{\rho^f_{\tilde{\Lambda}^3}}&=&\frac{2p_{0}+1}{3}
> \frac{2}{3}, \nonumber \\
\Delta_{\rho^f_{\tilde{\Lambda}^3}}&\neq & 0. \nonumber 
\end{eqnarray}
Hence, in this case, the final state is never useful for UQT.
\end{proof}

\begin{proposition}
If Alice sends one half of a Bell state $|\Phi_1\rangle$ through any qubit channel $\Lambda^4$ associated with rank-four Choi state, then the final shared state will be useful for UQT when $\Lambda^4$ belongs to a strict subset of all qubit channels  associated with rank-four Choi states.
\label{prop4}
\end{proposition}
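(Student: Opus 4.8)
The plan is to mirror the two-example strategy used in the proof of Proposition~\ref{prop3}: I would exhibit one qubit channel with a genuine rank-four Choi state for which the final shared state is useful for UQT (establishing that the subset is non-empty), and a second rank-four channel for which the final state is useful but \emph{not} universal (establishing that the subset is strict). Since Alice sends half of $|\Phi_1\rangle$, the final state $\rho^f$ is always the Choi state of the channel, so the whole analysis reduces to evaluating the correlation-matrix eigenvalues $t_{kk}=\mathrm{Tr}(\rho^f\,\sigma_k\otimes\sigma_k)$ and applying the criterion that $\rho^f$ is useful for UQT if and only if $|t_{11}|=|t_{22}|=|t_{33}|>\frac{1}{3}$.

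For the non-emptiness part I would use the depolarizing channel, whose Kraus operators are
\begin{equation}
K_0=\sqrt{1-\tfrac{3p}{4}}\,\mathbf{I},\quad K_i=\sqrt{\tfrac{p}{4}}\,\sigma_i\quad(i=1,2,3),
\end{equation}
with $0<p<1$. Because $\{\mathbf{I},\sigma_1,\sigma_2,\sigma_3\}$ are linearly independent and every coefficient is strictly positive on this range, the associated Choi state is genuinely rank four. Sending one half of $|\Phi_1\rangle$ through this channel yields the isotropic state $\rho^f=(1-p)|\Phi_1\rangle\langle\Phi_1|+p\,\frac{\mathbf{I}_4}{4}$, whose correlation matrix is $T_{\rho^f}=(1-p)\,\mathrm{diag}(1,-1,1)$. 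Hence $|t_{11}|=|t_{22}|=|t_{33}|=1-p$ and $\det(T_{\rho^f})=-(1-p)^3<0$, so the closed-form expressions for $F_{\rho^f}$ and $\Delta_{\rho^f}$ apply and the state is useful for UQT precisely when $1-p>\frac{1}{3}$, i.e. for $0<p<\frac{2}{3}$. This shows that rank-four channels producing UQT-useful states do exist.

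For strictness I would take a generic Pauli channel with four distinct weights,
\begin{equation}
K_0=\sqrt{p_0}\,\mathbf{I},\ K_1=\sqrt{p_1}\,\sigma_1,\ K_2=\sqrt{p_2}\,\sigma_2,\ K_3=\sqrt{p_3}\,\sigma_3,
\end{equation}
with all $p_i>0$ and $\sum_i p_i=1$ (again rank four). The final state is the Bell-diagonal mixture $\rho^f=\sum_{i=0}^{3}p_i|\Phi_{i+1}\rangle\langle\Phi_{i+1}|$, and summing the individual correlation matrices of the four Bell states gives $t_{11}=p_0+p_1-p_2-p_3$, $t_{22}=-p_0+p_1-p_2+p_3$, and $t_{33}=p_0-p_1-p_2+p_3$. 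For a generic choice of the weights these three quantities have unequal magnitudes, so $\Delta_{\rho^f}\neq 0$ while $F_{\rho^f}$ can still exceed $\frac{2}{3}$; the final state is then useful but not universal. This rules out the entire rank-four family and confirms that the relevant subset is strict.

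The only delicate points, which I would verify rather than assert, are (i) that each Choi state is genuinely rank four and does not collapse to a lower rank on the parameter range claimed---this follows from the linear independence of the Pauli basis together with the strict positivity of the coefficients---and (ii) that $\det(T_{\rho^f})<0$ in each case, which is precisely what licenses the use of the closed-form expressions for $F_{\rho^f}$ and $\Delta_{\rho^f}$; outside that regime a separate argument would be required. I expect the sign bookkeeping in the correlation matrices of the four Bell states to be the most error-prone step, but it is a finite, mechanical check rather than a genuine obstacle.
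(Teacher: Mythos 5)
Your proposal is correct and follows essentially the same route as the paper: the paper also proves Proposition~\ref{prop4} by exhibiting a depolarizing-type (Werner) channel whose rank-four Choi state is useful for UQT for a range of the parameter, and a generic four-parameter Pauli channel with distinct weights whose Bell-diagonal output has $F_{\rho^f}>\frac{2}{3}$ but $\Delta_{\rho^f}\neq 0$. Your versions of both examples are just reparametrizations of the paper's, and your correlation-matrix bookkeeping (including the sign pattern and $\det(T_{\rho^f})<0$) checks out.
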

\begin{proof}
Here also, we will prove the proposition by presenting two examples. Consider the qubit (unital) channel $\Lambda^4$ associated with a rank four Choi state having Kraus operators given by,
\begin{equation}
K^{\Lambda^4}_{0}=\sqrt{p}\,\mathbf{I},~K^{\Lambda^4}_{i}=\sqrt{\frac{1-p}{3}}\sigma_{i},~~i=1,2,3,    
\end{equation}
with $0< p < 1$. In this case, the final shared state is a rank-four Werner state given by, 
\begin{equation}
    \rho^f_{\Lambda^4}=p|\Phi_{1}\rangle \langle \Phi_{1}|+\frac{1-p}{3}\sum_{i=2}^{4}|\Phi_{i}\rangle \langle \Phi_{i}|
\end{equation}
When $\frac{1}{2}< p < 1$, the maximal fidelity and the fidelity deviation of $\rho^f_{\Lambda^4}$ is given by, 
\begin{eqnarray}
F_{\rho^f_{\Lambda^4}}&=& \frac{2p+1}{3}>\frac{2}{3}, \nonumber \\
\Delta_{\rho^f_{\Lambda^4}}&=& 0. \nonumber
\end{eqnarray}
Hence, the final shared state is useful and universal for QT for a specific range of $p$. 

Next, consider another qubit channel $\tilde{\Lambda}^4$ associated with a rank-four Choi state having Kraus operators,
\begin{align}
&K^{\tilde{\Lambda}^4}_{0}=\sqrt{p_{0}}~\mathbf{I},~~K^{\tilde{\Lambda}^4}_{1}=\sqrt{p_{1}}~\sigma_{1}, \nonumber \\ &K^{\tilde{\Lambda}^4}_{2}=\sqrt{p_{2}}~\sigma_{2},~~K^{\tilde{\Lambda}^4}_{3}=\sqrt{p_{3}}~\sigma_{3}
\end{align}
with $\sum\limits_{i=0}^{3} p_{i}=1$ and $1> p_0 > p_1 > p_2 > p_3>0$. 
The final shared state is given by,
\begin{equation}
\rho^f_{\tilde{\Lambda}^4}=\sum_{i=0}^{3}p_{i}|\Phi_{i+1}\rangle \langle \Phi_{i+1}|.
\end{equation}
When $0 < p_0 \leq \frac{1}{2}$, the above final state is not useful for UQT \cite{FD}. On the other hand, for $\frac{1}{2} < p_0 < 1$,  the maximal fidelity and fidelity deviation can be written as 
\begin{eqnarray}
F_{\rho^f_{\tilde{\Lambda}^4}}&=&\frac{2p_{0}+1}{3} > \frac{2}{3}, \nonumber \\
\Delta_{\rho^f_{\tilde{\Lambda}^4}}&\neq & 0. \nonumber 
\end{eqnarray}
Hence, the final state is not useful for UQT in this case. The above two examples complete the proof.
\end{proof}

Next, we will characterize the set of unital as well as non-unital channels for which the final shared state will be useful for UQT.

\subsection{Alice sends one half of a Bell state via a unital channel}

Here we consider the scenario where Alice prepares the two-qubit Bell state $|\Phi_{1}\rangle$ and sends half of this state to Bob through a unital channel $\Lambda_u$. In this case, we present the following proposition.

\begin{proposition}
If Alice sends one half of a Bell state $|\Phi_{1} \rangle$ via any unital channel $\Lambda_u$, then the final shared state will be useful for UQT if and only if the channel is either unitary or a single parameter unital channel associated with rank-four Choi state (for a particular range of the channel parameter).
\end{proposition}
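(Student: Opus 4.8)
The plan is to work entirely at the level of the Choi state, exploiting that usefulness and universality depend only on the singular values $|t_{ii}|$ of the correlation matrix, which are local-unitary invariants. Since $\Lambda_u$ is unital, its Choi state equals, up to local unitaries, the Bell-diagonal state of Eq.~(\ref{unitalchoistate}), namely $\rho_{\Lambda_u,\Phi_1}=\frac14\big[\mathbf{I}_4+\sum_k c_k\,\sigma_k\otimes\sigma_k\big]$ with $c_k=\lambda_k|t_{kk}|$. By the UQT criterion $|t_{11}|=|t_{22}|=|t_{33}|>\frac13$, the final state is useful for UQT if and only if $|c_1|=|c_2|=|c_3|=t$ for some $t>\frac13$. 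This reduces the entire question to: which unital channels produce a Bell-diagonal Choi state whose three correlation coefficients have equal magnitude exceeding $\frac13$?

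First I would impose positivity. Writing $c_k=\epsilon_k t$ with $\epsilon_k\in\{\pm1\}$, the four eigenvalues of the Bell-diagonal state take the form $\frac14(1-t\,\eta)$ with $\eta\in\{\epsilon_1+\epsilon_2+\epsilon_3,\ \epsilon_1-\epsilon_2-\epsilon_3,\ -\epsilon_1+\epsilon_2-\epsilon_3,\ -\epsilon_1-\epsilon_2+\epsilon_3\}$, and a short enumeration over the eight sign patterns shows that for $t>\frac13$ the state is positive semidefinite \emph{only} when the product $\epsilon_1\epsilon_2\epsilon_3=-1$ (equivalently $\det T_{\rho}<0$, matching the usefulness requirement). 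In every such case three eigenvalues collapse to $\frac{1-t}{4}$ and the fourth equals $\frac{1+3t}{4}$; that is, the Choi state is necessarily a Werner state, a mixture of a single Bell state with the maximally mixed state. This positivity case analysis is the structural heart of the argument and the main obstacle.

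Next I would read off the rank. For $t=1$ the triple eigenvalue $\frac{1-t}{4}$ vanishes, so the Choi state is rank one; by Proposition~\ref{prop1} (and the fact that a rank-one Choi state forces a unitary Kraus operator \cite{ruskai}) this corresponds exactly to a unitary channel. For $\frac13<t<1$ all four eigenvalues are strictly positive, so the Choi state is rank four, while $t>1$ is excluded by positivity. Ranks two and three are therefore impossible for a UQT-useful unital channel, consistent with Proposition~\ref{prop2} and with the rank-three unital example in Proposition~\ref{prop3}.

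Finally I would invoke the Choi--Jamiolkowski bijection of Eq.~(\ref{Choimap}): each Choi state corresponds to a unique channel. The Werner state carrying weight $p$ on the Bell component is precisely the Choi state of the single-parameter depolarizing channel $K_0=\sqrt{p}\,\mathbf{I}$, $K_i=\sqrt{(1-p)/3}\,\sigma_i$ (up to the local unitaries already factored out), whose correlations satisfy $|c_i|=|4p-1|/3$, so that $t>\frac13$ is equivalent to $p>\frac12$. Collecting the two admissible cases, rank-one (unitary) and rank-four Werner (the single-parameter unital channel restricted to $p\in(\tfrac12,1)$), yields both directions of the ``if and only if''; once the Werner reduction is in hand, the rank count and the channel-state bijection finish the proof.
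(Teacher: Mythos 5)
Your proof is correct, and its skeleton coincides with the paper's: both reduce the problem, via the Pauli-mixture form of a unital channel, to deciding which Bell-diagonal Choi states satisfy the UQT criterion $|t_{11}|=|t_{22}|=|t_{33}|>\frac{1}{3}$. The difference lies in how that decision is made. The paper works directly with the mixing weights $p_i$ of the Bell-diagonal decomposition and simply imports from \cite{FD} the classification that rank-one maximally entangled states and rank-four Werner states (in the appropriate parameter window) are the only useful-and-universal Bell-diagonal states, after which the channel conditions $p_i=p_j=p_k=\frac{1-p_l}{3}$ with $\frac{1}{2}<p_l<1$, or $p_i=1$, are read off. You instead re-derive that classification from first principles: parametrizing $c_k=\epsilon_k t$, enumerating the eight sign patterns, and showing that positivity at $t>\frac{1}{3}$ forces $\epsilon_1\epsilon_2\epsilon_3=-1$, hence the Werner spectrum $\bigl\{\tfrac{1+3t}{4},\tfrac{1-t}{4},\tfrac{1-t}{4},\tfrac{1-t}{4}\bigr\}$; the rank dichotomy ($t=1$ giving a unitary channel versus $\frac{1}{3}<t<1$ giving a rank-four Werner state) and the Choi--Jamiolkowski bijection then finish the argument, with the correlation computation $|c_i|=|4p-1|/3$ correctly pinning down the window $p>\frac{1}{2}$. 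Your version buys self-containedness --- the key lemma the paper cites is proved rather than assumed --- and it makes transparent why ranks two and three can never occur for a UQT-preserving unital channel, which dovetails with Propositions \ref{prop2} and \ref{prop3}; the paper's version is shorter and keeps the channel parameters $p_i$ visible throughout, so the characterization of admissible channels appears with no extra translation step.
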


\begin{proof}
 If one half of the Bell state  $\Phi_{1}$ is sent via a unital channel $\Lambda_u$, then, up to local unitary rotations, the shared state after the channel interaction is given by \cite{QC,ruskai}, 
 \begin{eqnarray}
\rho^f_{u}&=& (\mathbf{I}\otimes \Lambda_{u})|\Phi_{1}\rangle \langle \Phi_{1}| \nonumber \\
&=& \sum_{i}(\mathbf{I}\otimes K_{i}^{\Lambda_{u}})|\Phi_{1}\rangle \langle \Phi_{1}|(\mathbf{I}\otimes K_{i}^{\Lambda_{u}})^{\dagger} \nonumber \\
&=&\sum_{i=0}^{3}p_{i}(\mathbf{I}\otimes \sigma_{i})|\Phi_{1}\rangle \langle \Phi_{1}|(\mathbf{I}\otimes \sigma_{i}) \nonumber \\
&=&  \sum_{i=0}^{3}p_{i}|\Phi_{i+1}\rangle \langle \Phi_{i+1}|, \label{BDform}
\end{eqnarray} 
with $0 \leq p_i \leq 1$ $\forall$ $i$, $\sum_{i=0}^{3} p_i =1$. Hence, the final state is Bell-diagonal up to local unitary rotations. Note that the above state $\rho^f_u$ is nothing but the Choi state associated with $\Lambda_u$. Now, rank-four Werner states (with a particular range of the state parameter) and rank-one maximally entangled pure states  are the only useful and universal states within Bell-diagonal class of states \cite{FD}. The final state (\ref{BDform}) will be rank-four Werner state if and only if $\Lambda_u$ satisfies $p_i = p_j = p_k =  \frac{1-p_l}{3}$ for any choice of $i \neq j \neq k \neq l \in \{0, 1, 2, 3\}$. Here, the rank-four Werner state will be useful and universal for QT when $\frac{1}{2} < p_l < 1$. That is, in this case, the unital channel is a one-parameter channel with rank-four Choi state having $p_i = p_j = p_k =  \frac{1-p_l}{3}$ for any choice of $i \neq j \neq k \neq l \in \{0, 1, 2, 3\}$ and $\frac{1}{2} < p_l < 1$. On the other hand, the final state (\ref{BDform}) will be a rank-one maximally entangled pure state if and only if $\Lambda_u$ satisfies $p_i = 1$ for any choice of $i \in \{0, 1, 2, 3\}$, which is nothing but a unitary channel.
\end{proof}

\subsection{Alice sends one half of a Bell state via non-unital channel}

Here we focus on non-unital qubit channels. When the initially prepared state is the Bell state $| \Phi_1\rangle$, we can state the following proposition,

\begin{proposition}
\textit{If Alice sends one half of a Bell pair $|\Phi_{1}\rangle$ through any non-unital channel $\Lambda_{nu}$, then the final shared state will be useful and universal for QT if and only if $\Lambda_{nu}$ belongs to a strict subset of non-unital qubit channels associated with rank-three and rank-four Choi states.}
\end{proposition}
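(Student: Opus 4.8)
The plan is to identify the final shared state with the Choi state $\rho^f=\rho_{\Lambda_{nu},\Phi_1}$ and to work directly in its canonical form (\ref{choistate}), using the universality criterion $|t_{11}|=|t_{22}|=|t_{33}|>\tfrac13$ from (\ref{mf&dev}). Since $\Lambda_{nu}$ is non-unital we have $\boldsymbol{s}\neq(0,0,0)$, and usefulness for QT forces $\det(T_{\rho^f})<0$; by the canonical-form rules of Sec.~\ref{sec2} this pins $\lambda_1=\lambda_2=\lambda_3=-1$. Writing the common value $|t_{11}|=|t_{22}|=|t_{33}|=\tau$, the state is then determined by the single correlation parameter $\tau$ together with the Bloch vector $\boldsymbol{s}$, so the whole problem collapses to analyzing a low-dimensional family.

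The key reduction is that the correlation block $-\tau\sum_k\sigma_k\otimes\sigma_k$ is isotropic and hence invariant under a simultaneous rotation $U\otimes U$, whose associated $R\in SO(3)$ acts on the marginal as $\boldsymbol{s}\mapsto R\boldsymbol{s}$. Because such a local unitary alters neither $F_{\rho^f}$ nor $\Delta_{\rho^f}$ (both depend only on the $|t_{kk}|$), I may without loss of generality take $\boldsymbol{s}=(0,0,s)$ with $s=|\boldsymbol{s}|>0$. In this frame $\rho^f$ is block-diagonal in the computational basis, and I would read off its four eigenvalues as $\tfrac14(1+s-\tau)$, $\tfrac14(1-s-\tau)$, and $\tfrac14\!\left[(1+\tau)\pm\sqrt{s^2+4\tau^2}\right]$. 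Imposing non-negativity gives the two constraints $s\le 1-\tau$ and $s^2\le(1-\tau)(1+3\tau)$, of which the first is binding since $(1-\tau)^2\le(1-\tau)(1+3\tau)$; together with $\tau>\tfrac13$ this yields the admissible region $\tfrac13<\tau<1$, $0<s\le 1-\tau$.

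Reading off which eigenvalues can vanish then settles the rank: throughout the interior $0<s<1-\tau$ all four eigenvalues are strictly positive, so $\rho^f$ has rank four, while on the boundary $s=1-\tau$ exactly one eigenvalue vanishes (the others staying positive for $\tau>\tfrac13$), giving rank three. This simultaneously excludes lower ranks: a rank-one Choi state is unitary and hence unital, and Proposition~\ref{prop2} already forbids rank two, so the only admissible ranks are three and four. This proves one direction of the equivalence and the non-emptiness of the subset; the converse is immediate, since any non-unital channel whose Choi state lies in the region above satisfies $|t_{11}|=|t_{22}|=|t_{33}|=\tau>\tfrac13$ and is therefore useful for UQT by (\ref{mf&dev}). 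Strictness follows because a generic non-unital channel with a rank-three or rank-four Choi state violates $|t_{11}|=|t_{22}|=|t_{33}|$, exactly as witnessed by the counterexamples underlying Propositions~\ref{prop3} and~\ref{prop4}. Finally, to obtain the advertised most general orthogonal Kraus operators I would invert the Choi--Jamio\l{}kowski isomorphism (\ref{Choimap}): diagonalizing $\rho^f$ and reshaping its (three or four) orthogonal eigenvectors into $2\times2$ matrices produces an orthogonal Kraus set, to be written in closed form in $\tau,s$ and checked against (\ref{complt}) and against $\sum_i K_iK_i^\dagger\neq\mathbf{I}_2$.

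The main obstacle I anticipate is the complete-positivity bookkeeping fused with the rank count. One must justify that aligning $\boldsymbol{s}$ to a single axis is genuinely without loss of generality, i.e. that the residual local-unitary freedom left after reaching canonical form is precisely the isotropy group $\{U\otimes U\}$ of $-\tau\sum_k\sigma_k\otimes\sigma_k$; and then one must track carefully which eigenvalue degenerations are compatible with $s>0$ and $\tau>\tfrac13$, since it is exactly the interplay of the two positivity bounds $s\le1-\tau$ and $s^2\le(1-\tau)(1+3\tau)$ that confines the rank to three (boundary) or four (interior) and rules out everything smaller.
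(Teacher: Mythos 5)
Your proposal is correct and follows essentially the same route as the paper: identify the final state with the canonical Choi state, rule out rank one (such channels are unital) and rank two (Proposition \ref{prop2}), obtain the positivity bound $|\boldsymbol{s}|\le 1-t$ that separates rank four ($|\boldsymbol{s}|<1-t$) from rank three ($|\boldsymbol{s}|=1-t$) exactly as in Appendix \ref{app1} (your four eigenvalues are the paper's $q_0,\dots,q_3$), argue strictness from generic violation of $|t_{11}|=|t_{22}|=|t_{33}|$, and build Kraus operators by reshaping eigenvectors through the Choi-Jamiolkowski isomorphism. Your one streamlining --- using the $U\otimes U$ invariance of the isotropic correlation block to set $\boldsymbol{s}=(0,0,s)$ --- is legitimate (it preserves positivity, rank, non-unitality, and both teleportation quantifiers) and turns the paper's general-$\boldsymbol{s}$ eigenvalue computation into a block-diagonal one; just note that the closed-form Kraus operators obtained in that frame cover only the $\boldsymbol{s}$-along-$\hat{z}$ subfamily, and the paper's fully general family of Eqs.~(\ref{nonukraus4}) and (\ref{nonukraus3gen}) is recovered by conjugating back with the rotation.
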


\begin{proof}
Let Alice sends one half of a Bell state $|\Phi_{1}\rangle $ through a non-unital channel $\Lambda_{nu}$ to Bob. The final shared state $\rho^f_{nu}$ between Alice and Bob will be nothing but the Choi state $\rho_{\Lambda_{nu},\Phi_{1}}$ associated with the channel $\Lambda_{nu}$. The generic structure of $\rho^f_{nu}$ can be written as
\begin{align}
 \rho^f_{nu} &= (\mathbf{I}\otimes \Lambda_{nu})|\Phi_{1}\rangle \langle \Phi_{1}| \nonumber \\
 &=\frac{1}{4}\left[ \mathbf{I}_{4}+ \mathbf{I}\otimes \boldsymbol{s} \cdot \mathbf{\sigma}~+ \sum_{i=1}^{3} t_{ii}~\sigma_{i}\otimes \sigma_{i})\right], \label{nonchoi}
\end{align}
Note here that the marginal at Bob's end is given by $\text{Tr}_{1}(\rho^f_{nu})=\frac{1}{2}\left( \mathbf{I}+\sum_{i=1}^{3}s_{i}\sigma_{i} \right)$, where $\lbrace s_{1},s_{2},s_{3}\rbrace$ are local vector components with $|\boldsymbol{s}|=\sqrt{s_{1}^{2}+s_{2}^{2}+s_{3}^{2}}\neq 0$. 
\paragraph*{}
Now, $\rho^f_{nu}$ cannot be rank-one as channels associated with rank-one Choi states are unital \cite{ruskai}. Next, if $\rho^f_{nu}$ is a rank-two state, then Proposition \ref{prop2} tells that it will not be useful and universal for QT. Hence, $\rho^f_{nu}$ will be useful for UQT if it is a rank-three or rank-four state. That is, the final state will be useful and universal for QT if $\Lambda_{nu}$ is associated with rank-three or rank-four Choi states. The set of channels $\Lambda_{nu}$, for which the final state $\rho^{f}_{nu}$ is useful for UQT, is always a strict subset of non-unital qubit channels associated with rank-three and rank-four Choi states. The reason for the strictness is very simple. When  Alice sends one half of a Bell state to Bob through an arbitrary non-unital channel associated with rank-four/rank-three Choi state, the final state given by Eq.(\ref{nonchoi}) may not satisfy the condition $|t_{11}|=|t_{22}|=|t_{33}|$. Among them, the set of useful and universal states must satisfy $|t_{11}|=|t_{22}|=|t_{33}|=t>\frac{1}{3}$. 
\end{proof}
Next, let us evaluate the orthogonal Kraus operators of the most general non-unital channels for which the final state will be useful for UQT. 
\paragraph*{For non-unital channels with rank four Choi states:} At first, we consider $\Lambda_{nu}$ associated with rank-four Choi states. In this case, the final state given by Eq.(\ref{nonchoi}) will be useful for UQT if and only if $|t_{11}|=|t_{22}|=|t_{33}|=t>\frac{1}{3}$.  Without any loss of generality, let us assume the canonical representation of $\rho^{f}_{nu}$ for which $t_{11}=t_{22}=t_{33}=-t$ and $\frac{1}{3}< t \leq 1$. 

Since, the Choi states of $\Lambda_{nu}$ are rank-four states, the final states will also be rank four states. This will be satisfied when $\rho^{f}_{nu}$ satisfies the following inequality (see Appendix \ref{app1} for details),
\begin{eqnarray}
    |\boldsymbol{s}|&< & 1-t. \label{rank4cond} 
\end{eqnarray}
From the above condition and from the condition of non-unital channel: $|\boldsymbol{s}| > 0$, one can say that $t $ cannot be equal to one. Henceforth, we will consider $\frac{1}{3}< t < 1$. Therefore, in case of non-unital channels with rank-four Choi states, when the final state is universal, it cannot have maximal average fidelity equal to unity. 

Now, from the spectral decomposition of the final state $\rho^{f}_{nu}$ with the condition (\ref{rank4cond}), one can construct the set of four orthogonal Kraus operators $\lbrace K_{i}^{\Lambda^{4}_{nu}}\rbrace$. The explicit expressions of these Kraus operators are given by (see Appendix \ref{app1} for details), 
\begin{widetext}
\begin{align}
    K_{0}^{\Lambda^{4}_{nu}}&= x_0\left( \begin{array}{cc}
   \dfrac{is_{1}+s_{2}}{s_{3}-2t-\sqrt{|\boldsymbol{s}|^{2}+4t^{2}}}  & \dfrac{i(|\boldsymbol{s}|^{2}+s_{3}(s_{3}+2\sqrt{|\boldsymbol{s}|^{2}+4t^{2}}))}{|\boldsymbol{s}|^{2}-s_{3}^{2}+4s_{3}t} \\
   \\
   - i  & \dfrac{is_{1}-s_{2}}{-s_{3}+2t+\sqrt{|\boldsymbol{s}|^{2}+4t^{2}}}
\end{array} \right), 
\nonumber \\
   K_{1}^{\Lambda^{4}_{nu}} &= x_1 \left( \begin{array}{cc}
   \dfrac{-i(|\boldsymbol{s}|+s_{3})}{s_{1}+is_{2}}  & -i \\
   \\
   -i  & \dfrac{i(s_{3}-|\boldsymbol{s}|)}{s_{1}-is_{2}}
\end{array} \right), \
\nonumber \\
 K_{2}^{\Lambda^{4}_{nu}}&=x_2\left( \begin{array}{cc}
   \dfrac{is_{1}+s_{2}}{s_{3}-2t+\sqrt{|\boldsymbol{s}|^{2}+4t^{2}}}  & \dfrac{i(|\boldsymbol{s}|^{2}+s_{3}(s_{3}-2\sqrt{|\boldsymbol{s}|^{2}+4t^{2}}))}{|\boldsymbol{s}|^{2}-s_{3}^{2}+4s_{3}t} \\
   \\
  -i  & \dfrac{-is_{1}+s_{2}}{s_{3}-2t+\sqrt{|\boldsymbol{s}|^{2}+4t^{2}}}
\end{array} \right), 
\nonumber \\
 K_{3}^{\Lambda^{4}_{nu}} &=x_3\left( \begin{array}{cc}
  \dfrac{is_{1}+s_{2}}{|\boldsymbol{s}|+s_{3}}   & -i \\
  \\
   -i  & \dfrac{i(|\boldsymbol{s}|+s_{3})}{s_{1}-is_{2}}
\end{array} \right), \label{nonukraus4} 
\end{align}
\end{widetext}
where  $\frac{1}{3}<t<1$; $~~ 0<|\boldsymbol{s}| = \sqrt{s_1^2+s_2^2+s_3^2}<  1-t$ and 
\begin{small}
\begin{eqnarray}
x_{0}&=&  \frac{(2t-s_{3}+\sqrt{|\boldsymbol{s}|^{2}+4t^{2}})}{2\sqrt{2}}~\sqrt{\frac{(1+t+\sqrt{|\boldsymbol{s}|^{2}+4t^{2}})}{|\boldsymbol{s}|^{2}+2t(2t+\sqrt{|\boldsymbol{s}|^{2}+4t^{2}})}}, \nonumber \\
\nonumber \\
x_{1}&=&\frac{1}{2\sqrt{2}}~\frac{\sqrt{(|\boldsymbol{s}|^{2}-s_{3}^{2})(1+|\boldsymbol{s}|-t)}}{|\boldsymbol{s}|}, \nonumber \\
\nonumber \\
x_{2}&=&\frac{(s_{3}-2t+\sqrt{|\boldsymbol{s}|^{2}+4t^{2}})}{2\sqrt{2}}~\sqrt{\frac{(1+t-\sqrt{|\boldsymbol{s}|^{2}+4t^{2}})}{|\boldsymbol{s}|^{2}+2t(2t-\sqrt{|\boldsymbol{s}|^{2}+4t^{2}})}},\nonumber \\
\nonumber \\
x_{3}&=& \frac{1}{2\sqrt{2}}~\frac{\sqrt{(|\boldsymbol{s}|^{2}-s_{3}^{2})(1-|\boldsymbol{s}|-t)}}{|\boldsymbol{s}|}. \nonumber 
\label{factors}
\end{eqnarray}
\end{small}
The above matrices given by Eq.(\ref{nonukraus4}) representing the Kraus operators are expressed in the basis $\{|0\rangle, |1\rangle \}$.
Any non-unital channel with rank-four Choi state, for which the final state will be useful for UQT, belongs to the set of Channels with the above four orthogonal Kraus operators. It can be easily checked that the above Kraus operators satisfy the completeness property, $\sum_{i=0}^{3}(K_{i}^{\Lambda^{4}_{nu}})^{\dagger}~K_{i}^{\Lambda^{4}_{nu}}=\mathbf{I}$. The Choi states associated with set of the above Kraus operators are non-negative. Hence, these Kraus operators represent CPTP maps. Moreover, $\sum_{i=0}^{3}K_{i}^{\Lambda^{4}_{nu}}~(K_{i}^{\Lambda^{4}_{nu}})^{\dagger}\neq \mathbf{I}$ holds as long as $|\boldsymbol{s}| > 0$.
\paragraph*{}
\paragraph*{For non-unital channels with rank three Choi states:}
Next, we consider $\Lambda_{nu}$ associated with rank-three Choi states. In this case also, the final state given by Eq.(\ref{nonchoi}) will be useful for UQT if and only if $|t_{11}|=|t_{22}|=|t_{33}|=t>\frac{1}{3}$.  Without any loss of generality, we consider the canonical representation of $\rho^{f}_{nu}$ for which $t_{11}=t_{22}=t_{33}=-t$.    The final state $\rho^{f}_{nu}$ will be rank-three state if the following condition is satisfied (for details, see Appendix \ref{app1}), 
\begin{equation}
|\boldsymbol{s}|=\sqrt{s_{1}^{2}+s_{2}^{2}+s_{3}^{2}}=1-t, \label{rankthree}
\end{equation}
Since, $|\boldsymbol{s}| > 0$, we have $t < 1$, i.e., the final state cannot have maximal average fidelity equal to one. From the above condition, one can parametrize $\lbrace s_{i}\rbrace$ as follows, 
\begin{align}
    s_{1}&=(1-t)~\sin{\theta}\cos{\phi}, \nonumber \\
    s_{2}&=(1-t)~\sin{\theta}\sin{\phi}, \nonumber \\
    s_{3}&=(1-t)~\cos{\theta}, \nonumber
\end{align} 
where $\theta \in [0,\pi]$ and $\phi \in [0,2\pi]$. With these, one can construct a the following complete set of three orthogonal Kraus operators $\lbrace K_{i}^{\Lambda^{3}_{nu}}\rbrace $ (see Appendix \ref{app1}), 
\begin{widetext}
\begin{align}
 K_{0}^{\Lambda^{3}_{nu}}&= y_0\left( \begin{array}{cc}
   \dfrac{i(1-t)\sin{\theta}~e^{-i\phi}}{(1-t)\cos{\theta}-2t-\sqrt{(1-t)^{2}+4t^{2}}}  & \dfrac{i\left((1-t)^{2}(1+\cos^{2}{\theta})+2(1-t)\cos{\theta}\sqrt{(1-t)^{2}+4t^{2}}\right )}{(1-t)^{2}\sin^{2}{\theta}+4t(1-t)\cos{\theta}} \\
   \\
   -i  & \dfrac{i(1-t)\sin{\theta}e^{i\phi}}{-(1-t)\cos{\theta}+2t+\sqrt{(1-t)^{2}+4t^{2}}}
\end{array} \right), 
\nonumber \\
   K_{1}^{\Lambda^{3}_{nu}} &= y_1 \left( \begin{array}{cc}
   \dfrac{-i(1+\cos{\theta})}{\sin{\theta}e^{i\phi}}  & -i \\
   \\
   -i  & \dfrac{-i(1-\cos{\theta})}{\sin{\theta}e^{-i\phi}}
\end{array} \right), \
\nonumber \\
 K_{2}^{\Lambda^{3}_{nu}}&=y_2\left( \begin{array}{cc}
   \dfrac{i(1-t)\sin{\theta}e^{-i\phi}}{(1-t)\cos{\theta}-2t+\sqrt{(1-t)^{2}+4t^{2}}}  & \dfrac{i\left((1-t)^{2}(1+\cos^{2}{\theta})-2(1-t)\cos{\theta}\sqrt{(1-t)^{2}+4t^{2}}\right )}{(1-t)^{2}\sin^{2}{\theta}+4t(1-t)\cos{\theta}} \\
   \\
  -i  & \dfrac{-i(1-t)\sin{\theta}e^{i\phi}}{(1-t)\cos{\theta}-2t+\sqrt{(1-t)^{2}+4t^{2}}}
\end{array} \right), 
\label{nonukraus3gen}
\end{align}
 where $\frac{1}{3}<t<1$ and
 \begin{align}
    y_{0}&=  \frac{\left[2t-(1-t)\cos{\theta}+\sqrt{(1-t)^{2}+4t^{2}}\right]}{2\sqrt{2}}~\sqrt{\frac{(1+t+\sqrt{(1-t)^{2}+4t^{2}})}{(1-t)^{2}+2t(2t+\sqrt{(1-t)^{2}+4t^{2}})}}, \nonumber \\
   \nonumber \\
   y_{1}&=\frac{\sin{\theta}\sqrt{1-t}}{2}, \nonumber \\
   \nonumber \\
   y_{2}&=\frac{\left[(1-t)\cos{\theta}-2t+\sqrt{(1-t)^{2}+4t^{2}}\right]}{2\sqrt{2}}~\sqrt{\frac{(1+t-\sqrt{(1-t)^{2}+4t^{2}})}{(1-t)^{2}+2t(2t-\sqrt{(1-t)^{2}+4t^{2}})}}. \nonumber
 \end{align}
 \end{widetext}
Here also, the above matrices given by Eq.(\ref{nonukraus3gen}) representing the Kraus operators are expressed in the basis $\{|0\rangle, |1\rangle \}$. These Kraus operators $\lbrace K_{i}^{\Lambda^{3}_{nu}}\rbrace$ always satisfy the completeness property, $\sum_{i=0}^{2}(K_{i}^{\Lambda^{3}_{nu}})^{\dagger}~K_{i}^{\Lambda^{3}_{nu}}=~\mathbf{I}$ and also the  condition   $\sum_{i=0}^{2}K_{i}^{\Lambda^{3}_{nu}}~(K_{i}^{\Lambda^{3}_{nu}})^{\dagger}\neq~\mathbf{I}$ holds. The Choi states associated with the above set of Kraus operators are non-negative. Hence, the set of Kraus operators given by Eq.(\ref{nonukraus3gen}) represent CPTP maps associated with non-unital channels.

The above class of channels represents the most general non-unital channels with rank-three Choi states for which  the  final  state  will  be  useful for UQT.

Note that the set of Kraus operators associated with any non-unital channel, for which the final state will be useful for UQT, are always unitarily connected with the set of orthogonal Kraus operators given by Eq.(\ref{nonukraus4}) or Eq.(\ref{nonukraus3gen}).

Now, we will present specific examples of non-unital channels associated with rank three and rank four Choi states. Let us consider the non-unital quantum channel associated with rank-four Choi state having the following four orthogonal Kraus operators, 
\begin{align}
K_{0}^{\Lambda^{4}_{nu}}&=\sqrt{\dfrac{6+\sqrt{17}}{(17-\sqrt{17})}}\left( \begin{array}{cc}
    0 & 1 \\
    \dfrac{1-\sqrt{17}}{4} & 0
\end{array} \right), \nonumber \\
K_{1}^{\Lambda^{4}_{nu}} &=\left( \begin{array}{cc}
    \dfrac{\sqrt{3}}{2\sqrt{2}} & 0 \\
    0 & 0
\end{array} \right), \nonumber \\
K_{2}^{\Lambda^{4}_{nu}}&=\sqrt{\dfrac{6-\sqrt{17}}{(17+\sqrt{17})}}\left( \begin{array}{cc}
    0 & 1 \\
   \dfrac{1+\sqrt{17}}{4} & 0
\end{array} \right), \nonumber\\
K_{3}^{\Lambda^{4}_{nu}} &=\left( \begin{array}{cc}
    0 & 0
    \\
    0 & \dfrac{1}{2 \sqrt{2}}
\end{array} \right). \label{exampleXrankfour}
\end{align}

In this case, the maximal fidelity and fidelity deviation of the final state $\rho^f_{\Lambda^{4}_{nu}}$ are given by,   
\begin{align}
F_{\rho^f_{\Lambda^{4}_{nu}}}&=\frac{3}{4} > \dfrac{2}{3}, ~~~~~\Delta_{\rho^f_{\Lambda^{4}_{nu}}}=0, \nonumber 
\end{align}
which imply that the final state is useful for UQT.

Next we will show an example non-unital qubit channel with rank-four Choi state which does not preserve the universality criterion. For example, consider the generalized amplitude-damping channel $\Lambda^{\text{GADC}}$ \cite{GADC,SGAD} with the following Kraus operators,
\begin{align}
    K_{0}^{\text{GADC}}&= \sqrt{1-N}\left( \begin{array}{cc}
        1 & 0 \\
        0 & \sqrt{1-\gamma}
    \end{array} 
    \right), \nonumber \\
    K_{1}^{\text{GADC}} &=\sqrt{1-N}\left( \begin{array}{cc}
        0 & \sqrt{\gamma} \\
        0 & 0
    \end{array} 
    \right), \nonumber \\
     K_{2}^{\text{GADC}}&= \sqrt{N}\left( \begin{array}{cc}
        \sqrt{1-\gamma} & 0 \\
        0 & 1
    \end{array} 
    \right), \nonumber \\
    K_{3}^{\text{GADC}}&=\sqrt{N}\left( \begin{array}{cc}
        0 & 0 \\
        \sqrt{\gamma} & 0
    \end{array} 
    \right), \label{GADchannel}
\end{align}
where $\gamma,~N\in [0,1]$. This channel is non-unital when $\gamma (2N-1)\neq 0$, i.e., for $\gamma \neq 0$ and $N \neq \frac{1}{2}$. Note that the above Kraus operators are not orthogonal, but it can be checked that the Choi state associated with this Channel is rank-four.  In this case, the maximal fidelity and fidelity deviation of the final state $\rho^f_{\Lambda^{\text{GADC}}}$ are given by,   
\begin{align}
    F_{\rho^f_{\Lambda^{\text{GADC}}}}&=\frac{1}{2}+\frac{2\sqrt{1-\gamma}+(1-\gamma)}{6}, \nonumber \\
    & > \dfrac{2}{3} \quad \text{when} \quad \gamma < 2(\sqrt{2}-1), \nonumber\\ 
    \Delta_{\rho^f_{\Lambda^{\text{GADC}}}}&= \frac{\sqrt{1-\gamma}}{3\sqrt{5}}\left( 1-\sqrt{1-\gamma}\right), \nonumber
\end{align}
where $\Delta_{\rho_{GADC}}=0$ holds if and only if $\gamma =0$ or $\gamma =1$. But the final state becomes useless for QT when $\gamma =1$. On the other hand, for $\gamma =0$, the channel does not remain to be non-unital. Hence, the final state is not useful for UQT in case of non-unital GADC.

The above two examples illustrate that the set of non-unital qubit channels with rank-four Choi states, for which the final state remains to be useful for UQT, forms a strict subset of all non-unital channels associated with rank four Choi states. 

Next, let us consider the non-unital channel with rank-three Choi state having the  Kraus operators given by Eq.(\ref{nonkrausex}). As shown earlier, the final state in this case is useful  for UQT for a particular range of the channel parameter. 

Next, let us present another example of non-unital channel with rank-three Choi state having the following three orthogonal Kraus operators,
\begin{align}
K_{0}^{\Lambda^{3}_{nu}}&=\left( \begin{array}{cc}
    \dfrac{-3 \, i}{20} \sqrt{5 + 7 \sqrt{\dfrac{5}{17}}} \quad & \dfrac{i}{20} \sqrt{65 + 107 \sqrt{\dfrac{5}{17}}} \\
    \\
    \dfrac{-i}{20} \sqrt{65 + 107 \sqrt{\dfrac{5}{17}}} \quad & \dfrac{3 \, i}{20} \sqrt{5 + 7 \sqrt{\dfrac{5}{17}}}
\end{array} \right), \nonumber \\
K_{1}^{\Lambda^{3}_{nu}} &=\left( \begin{array}{cc}
    \dfrac{- 3 \, i}{2\sqrt{10}} \quad & \dfrac{- 3 \, i}{2\sqrt{10}} \\
    \\
    \dfrac{- 3 \, i}{2\sqrt{10}} \quad  & \dfrac{- 3 \, i}{2\sqrt{10}}
\end{array} \right), \nonumber \\
K_{2}^{\Lambda^{3}_{nu}} &=\left( \begin{array}{cc}
    \dfrac{3 \, i}{20} \sqrt{5 - 7 \sqrt{\dfrac{5}{17}}} \quad & \dfrac{i}{20} \sqrt{65 - 107 \sqrt{\dfrac{5}{17}}} \\
    \\
    \dfrac{-i}{20} \sqrt{65 - 107 \sqrt{\dfrac{5}{17}}} \quad & \dfrac{-3 \, i}{20} \sqrt{5 - 7 \sqrt{\dfrac{5}{17}}}
\end{array} \right). \label{exampleXrankthreenew}
\end{align}
In this case, the maximal fidelity and fidelity deviation of the final state $\rho^f_{\Lambda^{3}_{nu}}$ are given by,  
\begin{align}
F_{\rho^f_{\Lambda^{3}_{nu}}}&=\frac{11}{20} < \dfrac{2}{3}, ~~~~~\Delta_{\rho^f_{\Lambda^{3}_{nu}}}=0. \nonumber 
\end{align}
Hence, this final state is universal, but not useful for QT, i.e., this state is not useful for UQT.

\section{Two-qubit pure non-maximally entangled state as the initial state} \label{sec5}
Here we consider the  scenario where Alice prepares a non-maximally entangled two-qubit pure state given by, $|\Psi_{a}\rangle = \sqrt{a}|00\rangle +\sqrt{1-a} |11\rangle $ with $\frac{1}{2}<a<1$ and sends half of this state to Bob through a qubit channel $\Lambda$. The concurrence of the initial state $|\Psi_{a}\rangle$ is given by, $C(|\Psi_a\rangle) = 2\sqrt{a(1 - a)}$ with $0 < C(|\Psi_a\rangle) < 1$. In this case, the initially prepared state is useful and but not universal for QT \cite{FD}. Here, we want to analyse in details the characteristics of the final states in terms of maximal fidelity and fidelity deviation. We start with  the following result.

\begin{proposition}
When Alice sends one half of a non-maximally entangled two-qubit pure state to Bob via any qubit channel, then maximal fidelity of the final state will be less than or equal to that of the initial state.
\end{proposition}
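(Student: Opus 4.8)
The plan is to phrase the whole statement through the correlation matrix, since the maximal fidelity of any two-qubit state is controlled by $N(\rho)\equiv\sum_i|t_{ii}|$, the sum of the singular values of $T_\rho$, via $F_\rho=\tfrac12\bigl(1+\tfrac13 N(\rho)\bigr)$. First I would fix the correlation data of the initial pure state: a direct evaluation of $T_{ij}=\mathrm{Tr}(|\Psi_a\rangle\langle\Psi_a|\,\sigma_i\otimes\sigma_j)$ gives the diagonal form $T^{\mathrm{init}}=\mathrm{diag}(C,-C,1)$, where $C=2\sqrt{a(1-a)}$ is the concurrence, together with local vectors $\boldsymbol R=\boldsymbol S=(0,0,2a-1)$. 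Hence $N(\rho^{\mathrm{init}})=1+2C$ and $F_{\rho^{\mathrm{init}}}=\tfrac{2+C}{3}$, and — crucially for the hard case below — purity ties the local vector to the correlations through $|\boldsymbol R|=|2a-1|=\sqrt{1-C^2}$. The proposition then reduces to the single inequality $N(\rho^f)\le N(\rho^{\mathrm{init}})=1+2C$.

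Next I would compute how $T_\rho$ transforms under the one-sided channel. Writing the affine (Bloch) representation of $\Lambda$ as $\boldsymbol v\mapsto M\boldsymbol v+\boldsymbol c$, so that $\Lambda(\sigma_j)=\sum_i M_{ij}\sigma_i$ and $\Lambda(\mathbf I_2)=\mathbf I_2+\boldsymbol c\cdot\boldsymbol\sigma$, and applying $\mathbf I\otimes\Lambda$ to the Hilbert–Schmidt form of Eq.~(\ref{HS}), I obtain the transformed correlation matrix
\[
T^f=T^{\mathrm{init}}M^{T}+\boldsymbol R\,\boldsymbol c^{T}.
\]
The task is thereby reduced to proving $\|T^{\mathrm{init}}M^{T}+\boldsymbol R\,\boldsymbol c^{T}\|_1\le\|T^{\mathrm{init}}\|_1$, where $\|\cdot\|_1$ denotes the sum of singular values.

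For the unital case this is immediate: there $\boldsymbol c=0$ and the Bloch matrix $M$ is a contraction (operator norm $\|M\|_\infty\le1$), so submultiplicativity of the nuclear norm gives $\|T^f\|_1=\|T^{\mathrm{init}}M^{T}\|_1\le\|T^{\mathrm{init}}\|_1\,\|M\|_\infty\le\|T^{\mathrm{init}}\|_1$. More conceptually, and covering all channels at once, one may instead invoke operational monotonicity: since $\rho^f=(\mathbf I\otimes\Lambda)\rho^{\mathrm{init}}$ arises from a local quantum operation on Bob's share, any teleportation protocol run on $\rho^f$ can be simulated on $\rho^{\mathrm{init}}$ by absorbing $\Lambda$ into the receiver's local operations, so the resource value — equivalently the fully entangled fraction $\mathcal F$, which fixes the fidelity through $F=\tfrac{2\mathcal F+1}{3}$ — cannot increase under such processing.

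The main obstacle is the genuinely non-unital case, where $\boldsymbol c\neq0$ and the rank-one shift $\boldsymbol R\,\boldsymbol c^{T}$ can a priori enlarge the singular values of $T^f$ rather than contract them. Here I would control the two terms jointly by invoking the complete-positivity (Fujiwara–Algoet/Ruskai) constraints that tie the singular values of $M$ to the admissible magnitude of $\boldsymbol c$ for a CPTP qubit map, combined with the purity relation $|\boldsymbol R|=\sqrt{1-C^2}$ established above; it is precisely this link between $\boldsymbol R$ and $C$ that prevents the translation term from overwhelming the contraction of $T^{\mathrm{init}}M^{T}$. Concretely, I expect to split $\|T^f\|_1$ by the triangle inequality and then feed in the CPTP inequalities on $(M,\boldsymbol c)$, collapsing the claim to an elementary inequality in the channel parameters and $C$. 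Establishing this final inequality under the full CPTP constraint set is the step I anticipate requiring the most care.
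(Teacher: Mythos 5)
Your reduction to the correlation matrix is set up correctly: the initial data $T^{\mathrm{init}}=\mathrm{diag}(C,-C,1)$, $\boldsymbol R=(0,0,2a-1)$, and the transformation law $T^f=T^{\mathrm{init}}M^{T}+\boldsymbol R\,\boldsymbol c^{T}$ are all right, and your unital-case argument (nuclear norm times operator norm, $\|T^{\mathrm{init}}M^{T}\|_1\le\|T^{\mathrm{init}}\|_1\|M\|_\infty\le\|T^{\mathrm{init}}\|_1$) is complete and correct, given the standard fact that $F_\rho\le\frac12\bigl(1+\tfrac13\|T_\rho\|_1\bigr)$ holds for every state while equality holds for the initial pure state. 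But the proposal has two genuine gaps, and they sit exactly where the proposition has content. First, the non-unital case is not proved: you explicitly defer it, and the route you sketch (triangle inequality plus CPTP constraints on $(M,\boldsymbol c)$) is lossy as stated, since $\|T^{\mathrm{init}}M^{T}\|_1$ can be arbitrarily close to $1+2C$ while $\|\boldsymbol R\,\boldsymbol c^{T}\|_1=|\boldsymbol R|\,|\boldsymbol c|>0$, so the split bound can exceed $1+2C$; one would need the joint Fujiwara--Algoet constraints in an essential, not additive, way. Second, and more seriously, the fallback argument you claim "covers all channels at once" is false as a general principle: the maximal average fidelity (equivalently the fully entangled fraction) is \emph{not} monotone under local channels acting on one share. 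The fidelity $F_\rho$ is defined as a maximum over the standard protocol plus local unitaries, a class that is not closed under absorbing $\Lambda$ into Bob's operations, and indeed local non-unital channels can strictly \emph{increase} the maximal fidelity of mixed two-qubit states --- a fact the paper itself records in its conclusion, citing Badziag \emph{et al.} (reference [can2]). Your simulation argument therefore cannot stand on its own; it is only rescued by the purity of the initial state, which is precisely the ingredient it never uses.

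The paper's proof shows how purity is the whole point, and it is three lines: concurrence, unlike fidelity, \emph{is} an LOCC monotone, so $C(\rho^f)\le C(|\Psi_a\rangle)$; the Verstraete--Verschelde bound gives $F_{\rho^f}\le\frac{2+C(\rho^f)}{3}$ for the (generally mixed) final state; and for the pure initial state the bound is saturated, $F_{|\Psi_a\rangle}=\frac{2+C(|\Psi_a\rangle)}{3}$. Chaining these yields the claim for \emph{all} qubit channels, unital or not, with no case analysis and no Bloch-representation computation. If you want to salvage your approach, the cleanest repair is to replace your "operational monotonicity of $\mathcal F$" step by exactly this argument: the monotone quantity is $C$, the bridge from $C$ back to $F$ is the Verstraete--Verschelde inequality, and purity of the input is what converts the upper bound into the initial state's actual fidelity.
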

\begin{proof}
Suppose Alice prepares a pure entangled state $|\Psi_{a}\rangle = \sqrt{a}|00\rangle +\sqrt{1-a} |11\rangle $, such that $\frac{1}{2}<a<1$. The maximal fidelity of this initial state is given by \cite{tele,FD},
\begin{align}
    F_{|\Psi_a\rangle}=\frac{2+C(|\Psi_a\rangle)}{3}. 
    \label{conini}
\end{align} 

When Alice sends one half of the state $|\Psi_{a}\rangle$ to Bob via any qubit channel, then the final state shared between Alice and Bob is denoted by $\rho^f$ and its concurrence is denoted by $C(\rho^f)$. Now, we have the following relation, $C(\rho^f) \leq C(|\Psi_a\rangle)$ as concurrence cannot be increased under local operations and classical communication \cite{conc}. It is well known that the maximal average fidelity $F_{\rho^{f}}$ is always upper bounded by \cite{uppbfidelity},
\begin{align}
    F_{\rho^{f}}&\leq ~\frac{2+C(\rho^{f})}{3} .
    \label{confin}
\end{align}
Hence, using relations (\ref{conini}), (\ref{confin}), we have
\begin{align}
    F_{\rho^{f}}&\leq \frac{2+C(|\Psi_a\rangle)}{3}  =F_{|\Psi_a\rangle}. \nonumber 
\end{align}
When the channel is an unitary channel (i.e., a particular unital channel), the above upper bound is saturated.
\end{proof}

Next, we focus on unital qubit channels.

\subsection{Alice sends one half of a non-maximally entangled two-qubit pure state through a unital channel}
At first, we present the following proposition which addresses the issue of usefulness and universality of the final state.
 \begin{proposition}
If Alice sends one half of any non-maximally entangled two-qubit pure state $|\Psi_{a}\rangle $ (with concurrence $C(|\Psi_a\rangle) \in (0,1)$) to Bob through any unital channel, then the final shared state will be useful for UQT for a strict subset of the unital channels if and only if $\frac{1}{2} < C(|\Psi_a\rangle)  < 1$. 
\end{proposition}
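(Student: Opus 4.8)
The plan is to work entirely with the singular values of the final correlation matrix, since the maximal fidelity and fidelity deviation depend only on $|t_{11}|,|t_{22}|,|t_{33}|$, and the useful-for-UQT criterion established earlier is $|t_{11}|=|t_{22}|=|t_{33}|=t>\tfrac13$. First I would fix the correlation matrix of the initial pure state. A direct evaluation of $T_{ij}=\mathrm{Tr}\!\left(|\Psi_a\rangle\langle\Psi_a|\,\sigma_i\otimes\sigma_j\right)$ yields the diagonal form $T=\mathrm{diag}(C,-C,1)$ with $C=C(|\Psi_a\rangle)=2\sqrt{a(1-a)}$, so that $\det T=-C^{2}<0$. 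Everything downstream is controlled by these three numbers.

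Next I would track how a unital channel deforms $T$. Writing a general unital channel as $\Lambda_u(\chi)=U\big(\sum_i p_i\,\sigma_i (V\chi V^{\dagger})\sigma_i\big)U^{\dagger}$, the pre- and post-unitaries $V,U$ act on Bob's Bloch sphere as rotations $O_V,O_U\in SO(3)$, while the Pauli part acts as the diagonal contraction $D=\mathrm{diag}(\eta_1,\eta_2,\eta_3)$ with $\eta_1=p_0+p_1-p_2-p_3$, and cyclically. Hence the final correlation matrix is $T_{\rho^f}=T\,O_V^{T}D\,O_U^{T}$, whose singular values coincide with those of $T\,O_V^{T}D$. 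The UQT demand that all three singular values equal some $t>\tfrac13$ is equivalent to $T\,O_V^{T}D^{2}O_V\,T=t^{2}\mathbf I$, i.e. $O_V^{T}D^{2}O_V=t^{2}T^{-2}=\mathrm{diag}(t^{2}/C^{2},\,t^{2}/C^{2},\,t^{2})$. Because orthogonal conjugation preserves eigenvalues, this forces the multiset identity $\{\eta_1^{2},\eta_2^{2},\eta_3^{2}\}=\{t^{2}/C^{2},\,t^{2}/C^{2},\,t^{2}\}$. The decisive conceptual point is that the rotational freedom in $V$ and $U$ buys nothing beyond this eigenvalue matching, so the whole problem collapses onto the three damping parameters of a Pauli channel.

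With $|\eta_3|=t$ and $|\eta_1|=|\eta_2|=t/C$ (the labelling is without loss of generality by the permutation symmetry of the constraints), I would then impose complete positivity through the tetrahedron inequalities $4p_i\ge0$, equivalently $1+\eta_3\ge|\eta_1+\eta_2|$ and $1-\eta_3\ge|\eta_1-\eta_2|$. A short case analysis over the signs of the $\eta_j$ shows that, whichever admissible signs one picks, $\{\eta_1,\eta_2\}$ always contribute a term $2\,(t/C)$ that must be dominated, and every viable branch reduces to the single binding inequality $2t/C-t\le1$, i.e. $t\le C/(2-C)$. Together with usefulness ($t>\tfrac13$) this means a valid channel exists iff the interval $\big(\tfrac13,\,C/(2-C)\big]$ is nonempty, i.e. iff $C/(2-C)>\tfrac13$, i.e. iff $C>\tfrac12$. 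For the converse I would simply exhibit the explicit Pauli channel ($O_V=O_U=\mathbf I$) with $\eta_1=\eta_2=t/C,\ \eta_3=t$ for any $t\in\big(\tfrac13,\,C/(2-C)\big]$; the probabilities $p_1=p_2=(1-t)/4$, $p_0=(1+t+2t/C)/4$, $p_3=(1+t-2t/C)/4$ are nonnegative precisely under these constraints, giving $T_{\rho^f}=\mathrm{diag}(t,-t,t)$ with $\det T_{\rho^f}=-t^{3}<0$ and hence a final state useful for UQT.

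Finally I would note that this set of channels is a \emph{strict} subset: the identity (unitary) channel returns $|\Psi_a\rangle$ itself, for which $|t_{33}|=1\neq C=|t_{11}|$, so it is useful but not universal, and generic unital channels fail the multiset identity altogether. The main obstacle is the ``only if'' direction, since one must exclude \emph{all} unital channels simultaneously; the step that makes this tractable is precisely the reduction of the general channel to the Pauli damping parameters via the eigenvalue-multiset identity, after which the sign-and-tetrahedron case analysis pins the threshold to exactly $C=\tfrac12$.
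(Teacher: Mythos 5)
Your proof is correct, and it takes a genuinely different route from the paper's. The paper reduces ``up to local unitary rotations'' to the bare Pauli channel, writes the final state explicitly as $\sum_i p_i(\mathbf{I}\otimes\sigma_i)|\Psi_a\rangle\langle\Psi_a|(\mathbf{I}\otimes\sigma_i)$, identifies it as an X-state, uses the X-state concurrence formula to impose entanglement ($p_0>p_1+p_2+p_3$), and then solves $|t_{11}|=|t_{22}|=|t_{33}|>\tfrac13$ for the $p_i$, obtaining $p_1=p_2=\frac{1+(1-2p_0)C}{4+2C}$, $t=\frac{(4p_0-1)C}{2+C}$ and the window $\frac{1+2C}{6C}<p_0\le\frac{1}{2-C}$, nonempty exactly when $C>\tfrac12$. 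You instead work entirely in the Bloch picture: UQT forces $T'T'^{T}=t^{2}\mathbf{I}$, hence the spectrum-matching condition $\{\eta_1^{2},\eta_2^{2},\eta_3^{2}\}=\{t^{2}/C^{2},t^{2}/C^{2},t^{2}\}$, and the tetrahedron (CPTP) inequalities then give the single binding constraint $t\le C/(2-C)$; your explicit family ($p_1=p_2=(1-t)/4$, etc.) coincides with the paper's under the substitution $t=(4p_0-1)C/(2+C)$. What your route buys is a fully rigorous ``only if'' direction over \emph{all} unital channels: for a non-maximally entangled input the pre-unitary $V$ in the Ruskai form $\Lambda_u=U\bigl(\Lambda_{\rm Pauli}(V\cdot V^{\dagger})\bigr)U^{\dagger}$ cannot be commuted onto Alice's side (the transpose trick works only for the Bell state), so the paper's tacit reduction to a Pauli channel leaves this step implicit, whereas your eigenvalue-multiset argument shows explicitly that the rotational freedom in $U,V$ cannot circumvent the threshold. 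What the paper's route buys is the explicit density matrices and the concurrence condition along the way. One small imprecision on your side: being useful for UQT is not literally \emph{equivalent} to ``all three singular values equal some $t>\tfrac13$''; usefulness additionally requires $\det T_{\rho^f}<0$. Your logic is unaffected, since you invoke only the \emph{necessity} of the singular-value condition in the exclusion direction and you verify $\det T_{\rho^f}=-t^{3}<0$ in your explicit construction, but the equivalence claim as stated is slightly too strong.
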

\begin{proof}
 If one half of the  state  $|\Psi_{a}\rangle $ is sent via an unital channel $\Lambda_u$, then, up to local unitary rotations, the shared state after the channel interaction is given by \cite{QC,ruskai}, 
 \begin{eqnarray}
\rho^f_{u}&=& (\mathbf{I}\otimes \Lambda_{u})|\Psi_{a}\rangle \langle \Psi_{a}| \nonumber \\
&=& \sum_{i}(\mathbf{I}\otimes K_{i}^{\Lambda_{u}})|\Psi_{a}\rangle \langle \Psi_{a}|(\mathbf{I}\otimes K_{i}^{\Lambda_{u}})^{\dagger} \nonumber \\
&=&\sum_{i=0}^{3}p_{i}(\mathbf{I}\otimes \sigma_{i})|\Psi_{a}\rangle \langle \Psi_{a}|(\mathbf{I}\otimes \sigma_{i}), \label{BDform}
\end{eqnarray} 
with $0 \leq p_i \leq 1$ $\forall$ $i$, $\sum_{i=0}^{3} p_i =1$. Up to local unitary transformations, without any loss of generality, we can assume that $p_0 \geq p_j$ $\forall$ $j \in \{1,2,3\}$. Now, it can be checked that the state $\rho^f_{u}$ belongs to the X-class of states \cite{x1,x2}. The expression of  concurrence for any X-class two-qubit state is known \cite{x2}. Using this, the concurrence of the state $\rho^f_{u}$ can be written as,
\begin{align}
C(\rho^f_{u})  &=  \max [0, (|p_1 - p_2| - p_0 - p_3) C(|\Psi_a\rangle), \nonumber \\
& \quad \quad \quad \quad \quad  (p_0 - p_1 - p_2 - p_3) C(|\Psi_a\rangle)].  
\end{align}
Now, we have $(|p_1 - p_2| - p_0 - p_3) \leq 0$ as long as $0 \leq p_i \leq 1$ $\forall$ $i \in \{0,1,2,3\}$ and $p_0 \geq p_j$ $\forall$ $j \in \{1,2,3\}$. Hence, $C(\rho^f_{u}) > 0$ if and only if $p_0 > p_1 + p_2 + p_3$. 

When $p_0 > p_1 + p_2 + p_3$ does not hold, the final state will not be entangled and, therefore, will not be useful for QT.  Henceforth, we will consider $p_0 > p_1 + p_2 + p_3$.

It can be verified that the correlation matrix of $\rho^f_{u}$ is  diagonal with the following eigenvalues,

\begin{eqnarray}
t_{11}&=& (p_{0}+p_{1}-p_{2}-p_{3}) \, C(|\Psi_a\rangle) = |t_{11}|, \nonumber \\
t_{22}&=& - (p_{0}-p_{1}+p_{2}-p_{3}) \, C(|\Psi_a\rangle) = - |t_{22}|, \nonumber \\
t_{33}&=&(p_{0}-p_{1}-p_{2}+p_{3}) = |t_{33}|. \label{corrunital}
\end{eqnarray}
Next, we use the relation, $p_3=1-p_0-p_1-p_2$. Now, the final state $\rho^f_{u}$ will be useful for UQT if and only if $|t_{11}|=|t_{22}|=|t_{33}|=t>\frac{1}{3}$. Using Eq.(\ref{corrunital}), it can be checked that the condition: $|t_{11}|=|t_{22}|=|t_{33}|=t$ is satisfied if and only if 
\begin{align}
p_1    = p_2 = \frac{1 + (1 - 2 p_0) \, C(|\Psi_a\rangle) }{4 + 2 C(|\Psi_a\rangle) }.
\end{align}
With these, we have
\begin{align}
|t_{11}|=|t_{22}|=|t_{33}|=t =   \frac{(4 p_0-1) C(|\Psi_a\rangle) }{2 + C(|\Psi_a\rangle) }
\end{align}
Now, $t >\frac{1}{3}$ will be satisfied if and only if
\begin{align}
   \frac{1}{4} < C(|\Psi_a\rangle)  &< 1 \quad \text{and} \nonumber \\
   \frac{1 + 2 C(|\Psi_a\rangle) }{6 C(|\Psi_a\rangle) } &< p_0 \leq 1 
\end{align}
With the above conditions, one can check that the following conditions hold: $0 \leq p_i \leq 1$ $\forall$ $i \in \{0,1,2,3\}$ and $p_0 > p_1 + p_2 + p_3$ if and only if 
\begin{align}
  \frac{1}{2} <& \, \, C(|\Psi_a\rangle)  < 1 \quad \text{and} \nonumber \\
    \frac{1 + 2 C(|\Psi_a\rangle) }{6 C(|\Psi_a\rangle) } <& p_0 \leq \, \frac{1}{2-C(|\Psi_a\rangle) }.   
\end{align}
The above conditions imply  that, for $\frac{1}{2} < C(|\Psi_a\rangle)  < 1$, the condition $|t_{11}|=|t_{22}|=|t_{33}|=t >\frac{1}{3}$ for the final state will not be satisfied for arbitrary choices of $p_0$, $p_1$, $p_2$, $p_3$. Only when $p_0$, $p_1$, $p_2$, $p_3$ satisfy some specific conditions, the final state will be useful for UQT. This completes the proof.
\end{proof}
The above proposition implies that for an arbitrary initial state $|\Psi_a \rangle$, one may not find any unital channel for which the final state is useful for UQT. But when the concurrence of the initial state is strictly greater than $\frac{1}{2}$, then one can always find a strict subset of the unital channels, for which the final state will be useful for UQT.
Hence, when one half of a non-maximally entangled two-qubit pure state with $\frac{1}{2} < C(|\Psi_a\rangle)  < 1$ is subjected to a particular unital channel, then there will be one disadvantage and one advantage. The disadvantage is that the maximal average fidelity of the final state will be less than or equal to that of the initial state. On the other hand, the advantage is that the final state will be useful for UQT though the initial state is not useful for UQT. Therefore, interaction of unital channel can reduce the fluctuation in fidelity which may have important information theoretic implications.

Next, we present two examples in support of the above proposition. Let us consider the unital channel $\Lambda_u^4$ associated with rank-four Choi state having the following Kraus operators,
\begin{align}
&K_0^{\Lambda_u^4} = \sqrt{\dfrac{2}{3}} \, \mathbf{I}, \nonumber \\   &K_1^{\Lambda_u^4} = \sqrt{\dfrac{3-p}{6(2+p)}} \, \sigma_1, \nonumber \\
&K_2^{\Lambda_u^4} = \sqrt{\dfrac{3-p}{6(2+p)}} \, \sigma_2, \nonumber \\ &K_3^{\Lambda_u^4} = \sqrt{\dfrac{2p-1}{3(2+p)}} \, \sigma_3, 
\end{align}
where $\frac{1}{2} \leq p <1$. Note that the above Kraus operators do not represent any CPTP map for $ p < \frac{1}{2}$ as the  associated Choi state becomes negative in this range.  Now, one half of the non-maximally entangled two-qubit pure state $|\Psi_{a}\rangle = \sqrt{a}|00\rangle +\sqrt{1-a} |11\rangle $ with $C(|\Psi_a\rangle) = 2\sqrt{a(1 - a)} = p$ is sent through the above channel.  The maximal average fidelity and fidelity deviation of the final shared state are given by,
\begin{align}
F_{\rho^f_{\Lambda_u^4}} &= \dfrac{3 + 4 \,  C(|\Psi_a\rangle)}{6+ 3 \, C(|\Psi_a\rangle)}\nonumber \\
&> \dfrac{2}{3} \quad \text{when} \quad \dfrac{1}{2} < C(|\Psi_a\rangle) < 1, \nonumber \\
\Delta_{\rho^f_{\Lambda_u^4}} &= 0 \quad \forall \, C(|\Psi_a\rangle) \in (0,1) . \nonumber
\end{align}
Hence, the final state in this case is useful for UQT when the concurrence of the initial state is strictly greater than $\frac{1}{2}$.

Next, consider the unital channel $\Lambda_u^2$ associated with rank-two Choi state having the following Kraus operators,
\begin{align}
&K_0^{\Lambda_u^2} = \sqrt{p} \, \mathbf{I}, \quad   &K_1^{\Lambda_u^2} = \sqrt{1-p} \, \sigma_3, \quad 0 < p <1.
\end{align}
One half of the state $|\Psi_{a}\rangle = \sqrt{a}|00\rangle +\sqrt{1-a} |11\rangle $ with $C(|\Psi_a\rangle) = 2\sqrt{a(1 - a)} \in (0,1)$ is sent through the above unital channel.  In this case, the maximal average fidelity and fidelity deviation of the final state are given by,
\begin{align}
F_{\rho^f_{\Lambda_u^2}} &= \left\lbrace \begin{array}{cc}
    \dfrac{2 +  \, C(|\Psi_a\rangle) (1- 2p)}{3}>\dfrac{2}{3} & \quad \text{when} \, \, 0< p < \dfrac{1}{2}, \\
    \\
    \dfrac{2}{3} & \quad \text{when} \, \, p= \dfrac{1}{2},\\
    \\
    \dfrac{2 +  \, C(|\Psi_a\rangle)  (2p-1)}{3}>\dfrac{2}{3} & \quad \text{when} \, \,  \dfrac{1}{2} < p <1,
\end{array}  
\right. \nonumber 
\end{align}
\begin{align}
\Delta_{\rho^f_{\Lambda_u^2}} &= \left\lbrace \begin{array}{cc}
    \dfrac{1- C(|\Psi_a\rangle)(1- 2p)}{3 \sqrt{5}} \neq 0 & \quad \text{when} \, \, 0< p < \dfrac{1}{2}, \\
    \\
    \dfrac{1}{3 \sqrt{5}} \neq 0 & \quad \text{when} \, \, p= \dfrac{1}{2},\\
    \\
    \dfrac{1- C(|\Psi_a\rangle)(2p-1)}{3 \sqrt{5}} \neq 0 & \quad \text{when} \, \, \dfrac{1}{2} < p <1,
\end{array}  
\right. \nonumber
\end{align}
Hence, the final state is not useful for UQT for any value of $a \in (\frac{1}{2}, 1)$.

\subsection{Alice sends one half of a non-maximally entangled two-qubit pure state through a non-unital channel}
Here, we consider that Alice prepares a non-maximally entangled two-qubit pure state $|\Psi_{a}\rangle = \sqrt{a}|00\rangle +\sqrt{1-a} |11\rangle $ with $\frac{1}{2}<a<1$ and sends one half of it to Bob through a non-unital qubit channel.

At first, let us take the non-unital channel $\Lambda_{nu}$ having the Kraus operators given by Eq.(\ref{nonkrausex}). It can be checked that the maximal average fidelity and fidelity deviation of the final shared state are given by,
\paragraph*{}
\paragraph*{}
\begin{widetext}
\begin{align}
F_{\rho^f_{\Lambda_{nu}}} &= \frac{1}{6}\left(3 + (1-p) \sqrt{1 - \left[C(|\Psi_a\rangle)\right]^2}  + p   + 2 p \, C(|\Psi_a\rangle) \right) \nonumber \\
&> \frac{2}{3} \quad \text{when} \quad 0 < C(|\Psi_a\rangle) < 1 \quad \text{and} \quad  \dfrac{2 + C(|\Psi_a\rangle) - 2 \sqrt{1 - \left[C(|\Psi_a\rangle)\right]^2}}{4 + 5 C(|\Psi_a\rangle)} < p <1, \nonumber \\
\Delta_{\rho^f_{\Lambda_{nu}}} &\neq 0 \quad \forall \,  C(|\Psi_a\rangle) \in (0,1) \quad \text{and} \quad \forall \, p \in (0,1). \nonumber 
\end{align}
Hence, the final state is not useful for UQT for all values of the concurrence of the initial state.

Next, let us take another example of non-unital channel $\widetilde{\Lambda}_{nu}$ having the following Kraus operators,
\begin{align}
K_0^{\widetilde{\Lambda}_{nu}} &=\dfrac{1}{\sqrt{2}} \sqrt{1-p_2-p_2 \dfrac{\sqrt{1-p_1}}{\sqrt{1+p_1}}} \left( \begin{array}{cc}
    0 \quad & 0 \\
    \\
    0 \quad & 1
\end{array} \right), \nonumber \\
K_1^{\widetilde{\Lambda}_{nu}} &=\dfrac{1}{\sqrt{2}} \sqrt{1-p_2+p_2 \dfrac{\sqrt{1-p_1}}{\sqrt{1+p_1}}}  \left( \begin{array}{cc}
    1 \quad & 0 \\
    \\
    0 \quad & 0
\end{array} \right), \nonumber \\
K_2^{\widetilde{\Lambda}_{nu}} &=\sqrt{\dfrac{1 + p_1 + p_2 + p_1 p_2 - p_2 \sqrt{1 + p_1} \sqrt{5 + 3 p_1} }{5 + 3 p_1 + \sqrt{1 - p_1} \sqrt{5 + 3 p_1}}}  \left( \begin{array}{cc}
    0 \quad & 1 \\
    \\
    \dfrac{\sqrt{1 - p_1} + \sqrt{5 + 3 p_1}}{2 \sqrt{1 + p_1}}
     \quad & 0
\end{array} \right), \nonumber \\
K_3^{\widetilde{\Lambda}_{nu}} &=\sqrt{\dfrac{1 + p_1 + p_2 + p_1 p_2 + p_2 \sqrt{1 + p_1} \sqrt{5 + 3 p_1} }{5 + 3 p_1 - \sqrt{1 - p_1} \sqrt{5 + 3 p_1}}}  \left( \begin{array}{cc}
    0 \quad & 1 \\
    \\
    \dfrac{\sqrt{1 - p_1} - \sqrt{5 + 3 p_1}}{2 \sqrt{1 + p_1}}
     \quad & 0
\end{array} \right), 
\end{align}
where $0 <p_1 < 1$ and $0 < p_2 < \dfrac{1 + p_1}{1 + p_1 + \sqrt{1 - p_1^2}}$. In this range, the above Kraus operators form a CPTP map corresponding to non-unital channel. Suppose, one half of the non-maximally entangled two-qubit pure state $|\Psi_{a}\rangle = \sqrt{a}|00\rangle +\sqrt{1-a} |11\rangle $ with $C(|\Psi_a\rangle) = 2\sqrt{a(1 - a)} = p_1$ is sent through the above channel. The maximal average fidelity and fidelity deviation of the final shared state are given by,
\begin{align}
F_{\rho^f_{\widetilde{\Lambda}_{nu}}} &= \dfrac{1 + p_2 \, C(|\Psi_a\rangle) }{2} \nonumber \\
&> \frac{2}{3} \quad \text{when} \quad \dfrac{\sqrt{17} -1}{6}  < C(|\Psi_a\rangle) < 1 \quad  \text{and} \quad  \dfrac{1}{3 C(|\Psi_a\rangle)} < p_2 < \dfrac{1 + C(|\Psi_a\rangle)}{1 + C(|\Psi_a\rangle) + \sqrt{1 - \left[C(|\Psi_a\rangle) \right]^2}}, \nonumber \\
\Delta_{\rho^f_{\widetilde{\Lambda}_{nu}}} &= 0 \quad   \forall \,  p_1 \in (0,1)  \quad \, \text{and} \quad \forall \, p_2 \in \left(0,\dfrac{1 + C(|\Psi_a\rangle)}{1 + C(|\Psi_a\rangle) + \sqrt{1 - \left[C(|\Psi_a\rangle) \right]^2}} \right). \nonumber 
\end{align}
\end{widetext}
Hence, the final state in this case is useful for UQT when the concurrence of the initial state is strictly greater than $\dfrac{\sqrt{17} -1}{6} \approx 0.52$.

Next, we consider another non-unital channel $\Lambda^{*}_{nu}$ having the following Kraus operators,
\begin{align}
K_0^{\Lambda^{*}_{nu}} &=\left( \begin{array}{cc}
    \sqrt{1-\gamma (p_{1})} \quad  & 0 \\
    \\
    0  \quad & 1
\end{array} \right), \nonumber \\
\nonumber \\
K_1^{\Lambda^{*}_{nu}} &=\left( \begin{array}{cc}
    0 \quad  & 0 \\
    \\
    \sqrt{\gamma(p_{1})} \quad & 0
    \end{array} \right),   
    \label{gadcnonmnew}
\end{align}
 where $\gamma (p_{1})= \dfrac{1 + \sqrt{1 - p_1^2} - \sqrt{ 3 p_1^2 - 2 + 2 \sqrt{1 - p_1^2}}}{2 + 2 \sqrt{1-p_1^2}}$ and  $0 < p_1 < 1$. Note that the above Kraus operators are obtained from the generalized amplitude-damping channel \cite{GADC} given by Eq.(\ref{GADchannel}) by putting $N=1$. Now, one half of the non-maximally entangled two-qubit pure state $|\Psi_{a}\rangle$ with concurrence $C(|\Psi_a\rangle) = 2\sqrt{a(1 - a)} = p_1$ is sent through the above channel. The maximal average fidelity and fidelity deviation of the final shared state in this case are given by,
\begin{widetext}
\begin{align}
F_{\rho^f_{\Lambda^{*}_{nu}}} &= \dfrac{3 - \sqrt{1 - \left[C(|\Psi_a\rangle) \right]^2} + \sqrt{ 3 \left[C(|\Psi_a\rangle) \right]^2 -2 + 2 \sqrt{1 - \left[C(|\Psi_a\rangle) \right]^2}}}{4} \nonumber \\
&> \frac{2}{3} \quad \text{when} \quad \sqrt{\dfrac{2}{3} \left(4 + \sqrt{3} \right) \left(1 - \frac{1}{6} \left(4 + \sqrt{3} \right) \right)}  < C(|\Psi_a\rangle)  < 1 \nonumber \\
\Delta_{\rho^f_{\Lambda^{*}_{nu}}} &= 0 \quad   \forall \quad   C(|\Psi_a\rangle) \in (0,1). \nonumber 
\end{align}
\end{widetext}
Hence, the final shared state is useful for UQT when the concurrence of the initial state is strictly greater than $\sqrt{\dfrac{2}{3} \left(4 + \sqrt{3} \right) \left(1 - \frac{1}{6} \left(4 + \sqrt{3} \right) \right)} \approx 0.41$.

We have analyzed with a number of analytical and numerical examples of non-unital channels. However, we have not found any non-unital qubit channel for which the final state is useful for UQT when the concurrence of the initial state is less than or equal to $0.41$. Hence, we can conjecture the following
\begin{conj}
If Alice sends one half of any pure entangled state $|\Psi_{a}\rangle $ to Bob through any non-unital channel, then the final shared state will be useful for UQT for a strict subset of the non-unital channels if and only if $C_{\text{Critical}} < C(|\Psi_a\rangle)  < 1$, where $0< C_{\text{Critical}} \leq 0.41$.
\end{conj}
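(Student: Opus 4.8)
The plan is to turn the universality requirement into an explicit feasibility problem over the manifold of non-unital qubit channels and then to (i) exhibit feasible channels for every $C(|\Psi_a\rangle)$ above the threshold and (ii) prove infeasibility below it. I would parametrize a general qubit channel by its action on Bloch vectors, $\boldsymbol{v} \mapsto M\boldsymbol{v} + \boldsymbol{c}$, where $M$ is a real $3\times3$ matrix and $\boldsymbol{c}$ is the non-unital shift (so $\boldsymbol{c} = \mathbf{0}$ iff the channel is unital). Since $|\Psi_a\rangle$ has Hilbert--Schmidt data $\boldsymbol{R} = \boldsymbol{S} = (0,0,2a-1)$ and $T_{|\Psi_a\rangle} = \mathrm{diag}\!\left(C(|\Psi_a\rangle),\, -C(|\Psi_a\rangle),\, 1\right)$, a direct computation gives the correlation matrix of the final state as $T_{\rho^f} = T_{|\Psi_a\rangle}\,M^{T} + \boldsymbol{R}\,\boldsymbol{c}^{T}$, with Bob's marginal shifted by $\boldsymbol{c}$. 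By the criterion following Eq.(\ref{mf&dev}), the final state is useful for UQT precisely when the three singular values of $T_{\rho^f}$ coincide at a common value $t$ with $t>\tfrac13$.

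Next I would impose the constraints. Using the freedom of local unitaries on Bob's qubit (which preserve UQT status) to bring $M$ to a convenient normal form, while tracking that such rotations act jointly on $\boldsymbol{c}$ and on the block $\boldsymbol{R}\,\boldsymbol{c}^{T}$, I would write out $T_{\rho^f}^{T}T_{\rho^f}$ and demand a triply degenerate spectrum; this yields two equality constraints. Supplementing these with $t>\tfrac13$, the strict non-unitality $\boldsymbol{c}\neq\mathbf{0}$, and the complete-positivity constraints on $(M,\boldsymbol{c})$ (equivalently, nonnegativity of the rank-three or rank-four Choi state of Eq.(\ref{choistate})) carves out a feasible region as a function of $C(|\Psi_a\rangle)$. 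The conjecture is then precisely the assertion that this region is non-empty exactly for $C(|\Psi_a\rangle) > C_{\mathrm{Critical}}$.

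For the sufficiency (``if'') direction, the range $C(|\Psi_a\rangle)>0.41$ is already settled constructively by the channel $\Lambda^{*}_{nu}$ of Eq.(\ref{gadcnonmnew}), whose final state is shown to be useful for UQT there; this simultaneously certifies $C_{\mathrm{Critical}}\le 0.41$. To drive the threshold down to its true value, I would regard the common-singular-value equations as defining a family of candidate channels and maximize the attainable $t$ (equivalently the maximal fidelity) over all CPTP non-unital $(M,\boldsymbol{c})$ at fixed $C(|\Psi_a\rangle)$; the largest $C(|\Psi_a\rangle)$ for which this maximum fails to exceed $\tfrac13$ would be $C_{\mathrm{Critical}}$. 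Here the earlier bound $F_{\rho^f}\le \tfrac{2 + C(|\Psi_a\rangle)}{3}$ of Eq.(\ref{confin}) by itself only forces $C(|\Psi_a\rangle)>0$, so strict positivity of $C_{\mathrm{Critical}}$ must emerge from incorporating the complete-positivity constraints into this optimization rather than from the fidelity bound alone.

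The hard part --- and the reason the statement is posed as a conjecture rather than a proposition --- is the necessity direction together with the exact determination of $C_{\mathrm{Critical}}$. Unlike the unital case, where the Choi state is Bell-diagonal and the analysis collapses to the three probabilities $p_i$, a general non-unital channel carries the full affine data $(M,\boldsymbol{c})$, and the cross term $\boldsymbol{R}\,\boldsymbol{c}^{T}$ couples the non-unital shift into the bottom row of the correlation block in a way that renders the constrained optimization non-convex and analytically unwieldy. Proving that no CPTP non-unital channel can equalize the three singular values above $\tfrac13$ whenever $C(|\Psi_a\rangle)\le C_{\mathrm{Critical}}$ amounts to a global infeasibility argument over this high-dimensional manifold, which the present analytical and numerical evidence supports but does not close.
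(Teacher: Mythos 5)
Your proposal is sound as far as it goes, and it correctly mirrors the actual status of this statement: it is a \emph{conjecture}, and the paper does not prove it either. What the paper offers as support is purely example-based: the channel family $\Lambda^{*}_{nu}$ of Eq.~(\ref{gadcnonmnew}) yields a UQT-useful final state exactly when $C(|\Psi_a\rangle)$ exceeds $\approx 0.41$ (certifying $C_{\text{Critical}} \le 0.41$), a second family $\widetilde{\Lambda}_{nu}$ does so only above $\approx 0.52$, and the authors report that an analytical and numerical search produced no channel working at or below $0.41$ --- nothing more. Your route is genuinely different and more systematic: you recast the problem as a feasibility question over the affine parametrization $(M,\boldsymbol{c})$ of qubit channels, and your transformation law $T_{\rho^f} = T_{|\Psi_a\rangle}\,M^{T} + \boldsymbol{R}\,\boldsymbol{c}^{T}$ is correct (the cross term $\boldsymbol{R}\,\boldsymbol{c}^{T}$ is precisely the mechanism by which non-unitality can help equalize the singular values, something the paper never makes explicit). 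Your observation that the bound of Eq.~(\ref{confin}) alone can only force $C(|\Psi_a\rangle) > 0$, so that any proof of strict positivity of $C_{\text{Critical}}$ must come from the complete-positivity constraints rather than from the fidelity--concurrence inequality, is also correct, and is a sharper diagnosis of where the difficulty lies than anything in the paper. What your framework buys is a well-posed (if non-convex) optimization whose solution would determine $C_{\text{Critical}}$ exactly; what the paper's examples buy is concrete, directly verifiable certificates. Neither closes the ``only if'' direction, and you say so honestly, so there is no gap in your attempt relative to the paper. Two minor points of care: the UQT criterion requires not only three coinciding singular values $t > \tfrac{1}{3}$ but also the invariant sign condition $\det T_{\rho^f} < 0$ (useful states form a subset of that class), and Bob's marginal after the channel is $M\boldsymbol{S} + \boldsymbol{c}$, not $\boldsymbol{S} + \boldsymbol{c}$; neither affects your framework, since complete positivity is imposed on $(M,\boldsymbol{c})$ anyway. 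Finally, note that both you and the paper implicitly assume the set of feasible concurrences is upward closed (an interval $(C_{\text{Critical}},1)$), which is itself an unproven part of the conjecture.
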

We could not determine the precise value of $C_{\text{Critical}}$; further investigations are needed.

This conjecture demonstrates that when one-half of the state $|\Psi_{a}\rangle = \sqrt{a}|00\rangle +\sqrt{1-a} |11\rangle $ with concurrence $0.41 < C(|\Psi_a\rangle) = 2\sqrt{a(1 - a)} \leq \frac{1}{2}$ is sent through a qubit channel, then the final state will be useful for UQT if and only if the channel is non-unital. This result is in some sense counter-intuitive as it shows the advantage of  non-unital channels or dissipative interactions over unital or non-dissipative interactions in the context of UQT.

The examples presented here demonstrate that, similar to the case of unital channels, non-unital channels can decrease the fidelity deviation while acting upon one-half of a non-maximally entangled two-qubit pure state.

\section{Analysis with some physical noise models}\label{sec6}

Here we supplement our above developed results to quantum channels motivated by physical noise models. Quantum channels modeling these effects can be Markovian or non-Markovian, both unital as well as non-unital.  We consider that the Bell state $| \Phi_1\rangle$
is prepared and half of this entangled state is sent to a distant location through any of these channels. In this context, we want to find out the channels for which the final shared state will remain useful for UQT.

Among the examples of unital channels, we take the Random Telegraph Noise (RTN), the modified Ornstein-Uhlenbeck (OUN), Power law noises (PLN), depolarising, phase damping (PD) and non-Markovian dephasing (NMD). As representatives of non-unital channels, we study the amplitude damping and the Unruh channels. The Kraus representations of these channels along with the channel parameters, their impact on QT are presented in the tables of Appendix \ref{app2}.

Random Telegraph Noise (RTN) \cite{rice} is a local non-Markovian dephasing noise \cite{pradeep}.  The effect of RTN on the dynamics of quantum systems, specifically quantum correlations and control of open system dynamics has been studied in \cite{piilo,ali,pinto,control}. The autocorrelation function for the RTN, represented by the stochastic variable $\Gamma (t)$, is given by
\begin{equation}
\langle \Gamma (t) \Gamma (s) \rangle = a^2 e^{-|t-s|\gamma}, \label{rtn}
\end{equation}
where $a$ has the significance of the strength of the system-environment coupling and $\gamma$ is proportional to the fluctuation rate of the RTN. The RTN channel poses a well defined Markovian limit \cite{pradeep}.

The modified Ornstein-Uhlenbeck noise (OUN) is a well-known stationary Gaussian random process \cite{ornstein} which is in general a non-Markovian process.  This could model, for example, the spin of an electron interacting with a magnetic field subjected to random fluctuations \cite{yueberley}. Power  law  noise  (PLN),  also  called  $1/f^{\alpha}$ noise  is  a  non-Markovian  stationary  noise  process, where ${\alpha}$ is some real number.  A functional relationship exists between  the  spectral  density  and  the  frequency  of  the  noise \cite{falci}. Both the OUN and PLN have well-defined Markovian limits \cite{weakerthanCP}.

The non-Markovian depolarising channel is a generalization of the depolarizing channel to the case of colored noise \cite{wodco}.  Phase Damping channel models the phenomena  where  decoherence  occurs  without  dissipation (loss of energy) \cite{banghosh}.  Non-Markovian  dephasing channel is an extension  of  the  dephasing  channel  to  non-Markovian  class, identified with the breakdown in complete-positivity of the map \cite{NMDeph}.

A canonical non-unital channel which models both decoherence as well as dissipation is the amplitude damping channel (ADC). This can be modelled by a standard Lindblad type of master equation \cite{sbbook,SGAD} describing the evolution in the Markovian regime. When the decoherence rate is time-dependent, with a damped oscillatory behavior, then the amplitude damping noise is non-Markovian \cite{weakerthanCP}. With an appropriate change in the parameters, the Markovian limit of the channel can be easily derived. As another example of a non-unital channel, we take up the Unruh  channel \cite{bansrik}. To an observer undergoing acceleration,  the  Minkowski  vacuum  appears  as  a  warm  gas emitting  black-body  radiation  at  temperature, related to the acceleration, and called the Unruh temperature. This is known as the Unruh effect. 

From the analysis (see Appendix \ref{app2}) it can be seen that, among the channels considered, only for the unital depolarizing (Markovian or non-Markovian) noise channel, the final state is useful for UQT. The final state in this case  is nothing but a Werner class of state. In contrast, all the non-unital channels considered here create rank-two  final states. As discussed earlier, these rank-two final states cannot be useful for UQT.

\section{Conclusion}\label{sec7}

Apart from numerous applications, the idea of QT has refined various fundamental concepts of quantum information theory. Hence, from practical as well as from a foundational perspective, it is relevant to address the issue of suitably realizing quantum channels that can be used as the resource of QT. Motivated by this, in the present study we have considered a scenario where an observer, say, Alice prepares a two-qubit pure entangled state and sends one half of it to a distant observer, say, Bob through a quantum channel. The shared entangled state thus prepared is then used as a resource for QT. In this scenario, we have characterized the set of qubit channels in terms of the final shared state's performance in QT. 

In order to characterize the efficacy of the shared state in the context of QT, we have used two quantifiers- maximal average fidelity \cite{tele,can1,f1,f2} and fidelity deviation \cite{FD,Opt}. These two quantifiers together help to introduce the notion of ``useful states for Universal Quantum Teleportation (UQT)". A two-qubit state is called useful for UQT if and only if the maximal average fidelity is strictly greater than $\frac{2}{3}$ (i.e., the classical bound) and all input states are teleported with the same fidelity \cite{FD}. Let us now explain the significance of UQT from practical point of view. In real experiments, quantum teleportation can be realized as a single shot quantum gate operation with an input and an output \cite{gate.dist}. In such realistic situations, it is not expected that teleportations of all possible input states will be performed. Rather, teleportations of some particular input states are executed. In such cases, if the fidelity deviation of the two-qubit entangled channel is non-zero, then those particular  input states may be teleported with fidelity much less than the desired maximal average fidelity. However, two-qubit entangled channels with zero fidelity deviation can overcome such drawback- all input states will be teleported equally well. Most importantly, when the maximal average fidelity of a two-qubit entangled channel lies near the classical-quantum boundary (i.e., when the maximal average fidelity is greater than, but close to $\frac{2}{3}$) and fidelity deviation is non-zero, then some input states may be teleported with fidelity less than $\frac{2}{3}$. That is why the states with zero fidelity deviation should be preferred over states with nonzero fidelity deviation, especially near the quantum-classical boundary. Hence, with two-qubit entangled states having zero fidelity deviation, the gate operation will be universal or fluctuation free. Otherwise, one has to implement different gate operations for different choices of input states, which is problematic. 

Using the above notions,  we have shown that when half of a Bell state (which is useful for UQT) is sent through a unital or non-unital channel, then the final state is useful for UQT for a strict subset of channels. We further completely characterize these channels for which the final states are useful for UQT. Hence, these results indicate that a subset of unital as well as non-unital qubit channels can preserve the usefulness and universality of a maximally entangled state in the context of QT while acting upon one half of the state.

If one-half of a non-maximally entangled two-qubit pure state (which is \text{not} useful for UQT) is sent to Bob through a unital or non-unital channel, then we have demonstrated that the final state may become useful for UQT. It thus signifies that the action of a qubit channel on one half of a non-maximally entangled two-qubit pure state can turn it into useful for UQT. In case of unital channels, we have completely characterized such channels. On the other hand, we have shown the above in case of non-unital channels by presenting some specific examples. However, a complete characterization  of non-unital channels for which the final state is useful for UQT when the initial state is a non-maximally entangled two-qubit pure state merits further investigation. 

The set of channels which converts a pure non-maximally entangled two-qubit state into a state useful for UQT becomes more crucial when the input state is weakly entangled. The reason is that the maximal average fidelity of a weakly entangled pure two-qubit state lies near the quantum-classical boundary (i.e., the maximal average fidelity is slightly greater than $\frac{2}{3}$). On the other hand, weakly entangled pure two-qubit states possess a large amount of fidelity deviation. Hence, some input states in these cases are teleportated with fidelity less than $\frac{2}{3}$, though the  maximal average fidelity is greater than $\frac{2}{3}$. Thus, in these cases, it is much more desirable to choose quantum channels which completely eliminate the fidelity deviation and also keep the state useful for QT. 

The present study opens up several other directions for future research. Firstly, one should consider a different scenario where a two-qubit pure entangled state is initially prepared by an observer, say, Charlie and then one qubit is sent to Alice, another one is sent to Bob  (where Alice, Bob, Charlie- all are spatially separated from each other) through two different quantum channels. In this scenario, it is worth characterizing the set of qubit channels for which the final state will be useful for UQT. Apart from pure input states, the initially prepared state can be a mixed entangled state. It has been shown \cite{can2} that some specific non-unital channel interactions can increase the maximal average fidelity of some mixed two-qubit states. Therefore, it would be interesting to characterize quantum channels which not only increase maximal average fidelity but also eliminate fidelity deviation. Next, extending the present study to higher dimensional systems is another area for future research. Finally, we believe that our results will help in the experimental implementation of QT in realistic contexts. 

\section*{Acknowledgement}
AG and DD acknowledge fruitful discussions with Somshubhro Bandyopadhyay. AG acknowledges Bose Institute, Kolkata for financial support. AG acknowledges his visit at Indian Institute of Technology Jodhpur, where a part of this work was done. DD acknowledges Science and Engineering Research Board (SERB), Government of India for financial support through National Post Doctoral Fellowship (NPDF). SB acknowledges support from the Interdisciplinary Cyber Physical Systems (ICPS) program of the Department of Science and Technology (DST), India, Grant No.: DST/ICPS/QuEST/Theme-1/2019/6.

\newpage
\appendix
\begin{widetext}

\newpage
\section{Non-unital channels preserving the useful and universal condition while acting on one half of a Bell state} \label{app1}
Alice prepares a Bell state $|\Phi_{1}\rangle$ and sends one half to Bob via any non-unital channel $\Lambda_{nu}$ resulting the final state $\rho^{f}_{nu}$.  Here $\rho^f_{nu}$ is nothing but the Choi state of the channel $\Lambda_{nu}$. Now, $\rho^f_{nu}$ cannot be rank-one as channels associated with rank-one Choi states are unital \cite{ruskai}. Moreover, Proposition \ref{prop2} tells that $\rho^f_{nu}$ cannot be a rank-two state if it is useful for UQT. Henceforth, we will consider that $\rho^f_{nu}$ is either rank-three or rank-four state.

If $\rho^{f}_{nu}$ is useful for UQT, then the canonical density matrix of the state $\rho^{f}_{nu}$ is given by,
\begin{eqnarray}
    \rho^{f}_{nu}=\frac{1}{4}\left( \begin{array}{cccc}
1+s_{3}-t & s_{1}-i~s_{2} & 0 & 0 \\
s_{1}+i~s_{2} & 1-s_{3}+t & -2t & 0 \\ 
0 & -2t & 1+s_{3}+t & s_{1}-i~s_{2} \\
0 & 0 & s_{1}+i~s_{2} & 1-s_{3}-t
\end{array}
\right), \label{outputchoi}
\end{eqnarray}
where $|\boldsymbol{s}| = \sqrt{s_1^2+s_2^2+s_3^2} > 0$ (as the Channel is non-unital)  and we have taken $t_{11}=t_{22}=t_{33}=-t$ with $\frac{1}{3} < t \leq 1$. The spectral decomposition of $\rho^{f}_{nu}$ is given by, 
\begin{equation}
    \rho^{f}_{nu}=\sum_{i=0}^{3}q_{i}|\chi_{i}\rangle \langle \chi_{i}|,~~~~~~~~~~~~~\sum_{i}q_{i}=1,
\end{equation}
where $\lbrace q_{i}\rbrace$ is the set of eigenvalues of $\rho^{f}_{nu}$ given by,  
\begin{eqnarray}
q_{0}&=& q_{0}(s_{1},s_{2},s_{3},t)= \frac{1}{4}\left[ 1+t+\sqrt{|\boldsymbol{s}|^{2}+4t^{2}}\right], \nonumber \\
q_{1}&=&q_{1}(s_{1},s_{2},s_{3},t)=\frac{1}{4}\left[ 1+|\boldsymbol{s}|-t\right], \nonumber \\
q_{2}&=&q_{2}(s_{1},s_{2},s_{3},t)= \frac{1}{4}\left[ 1+t-\sqrt{|\boldsymbol{s}|^{2}+4t^{2}}\right], \nonumber \\
q_{3}&=&q_{3}(s_{1},s_{2},s_{3},t)=\frac{1}{4}\left[ 1-|\boldsymbol{s}|-t\right]. \label{eigenvalues}
\end{eqnarray}
And the set of eigenvectors $\lbrace |\chi_{i}\rangle \rbrace$ of $\rho^f_{nu}$ are given by,
\begin{eqnarray}
    |\chi_{0}\rangle &=& \frac{1}{\sqrt{n_{0}}}\left[  \frac{is_{1}+s_{2}}{s_{3}-2t-\sqrt{|\boldsymbol{s}|^{2}+4t^{2}}}~|00\rangle - ~i ~|01\rangle + \frac{i \left( |\boldsymbol{s}|^{2} + s_{3} \left( s_{3} + 2 \sqrt{|\boldsymbol{s}|^{2} + 4t^{2}} \right) \right)}{|\boldsymbol{s}|^{2}-s_{3}^{2}+4s_{3}t} ~ |10\rangle +\frac{is_{1}-s_{2}}{-s_{3}+2t+\sqrt{|\boldsymbol{s}|^{2}+4t^{2}}}~|11\rangle\right], \nonumber \\
    \nonumber \\
    |\chi_{1}\rangle &=&\frac{1}{\sqrt{n_{1}}}\left[  \frac{-i(|\boldsymbol{s}|+s_{3})}{s_{1}+is_{2}} |00\rangle  -i~ |01\rangle  -i~ |10\rangle + \frac{i(s_{3}-|\boldsymbol{s}|)}{s_{1}-is_{2}} |11\rangle\right], \nonumber \\
    \nonumber \\
    |\chi_{2}\rangle &=& \frac{1}{\sqrt{n_{2}}}\left[   \frac{is_{1}+s_{2}}{s_{3}-2t+\sqrt{|\boldsymbol{s}|^{2}+4t^{2}}}~|00\rangle -i ~ |01\rangle +\frac{i \left( |\boldsymbol{s}|^{2} + s_{3} \left( s_{3} - 2\sqrt{|\boldsymbol{s}|^{2}+4t^{2}} \right) \right)}{|\boldsymbol{s}|^{2}-s_{3}^{2}+4s_{3}t} ~|10\rangle  +\frac{-is_{1}+s_{2}}{s_{3}-2t+\sqrt{|\boldsymbol{s}|^{2}+4t^{2}}}~|11\rangle\right], \nonumber \\
    \nonumber \\
    |\chi_{3}\rangle &=&\frac{1}{\sqrt{n_{3}}}\left[  \frac{is_{1}+s_{2}}{|\boldsymbol{s}|+s_{3}} |00\rangle -i ~ |01\rangle -i ~ |10\rangle +\frac{i(s_{3} + |\boldsymbol{s}|)}{s_{1}-is_{2}} |11\rangle \right], \label{eigennu}
\end{eqnarray}
where $\lbrace n_{0},n_{1},n_{2},n_{3}\rbrace$ are the normalization factors of the eigenstates. The set of the normalized eigenstates $\lbrace |\chi_{i}\rangle \rbrace$ forms an orthonormal basis. Let us now define $\delta_{1}=q_{0}-q_{1}$, $\delta_{2}=q_{1}-q_{2}$ and $\delta_{3}=q_{2}-q_{3}$. Since the parameters $|\boldsymbol{s}|$ and $t$ are always non-negative numbers, one can verify that the conditions
\begin{align}
\delta_{1}&=2t+\sqrt{|\boldsymbol{s}|^{2}+4t^{2}}-|\boldsymbol{s}|>0 \label{a1inequality1} \\ \delta_{2}&=|\boldsymbol{s}|+\sqrt{|\boldsymbol{s}|^{2}+4t^{2}}-2t> 0, \label{a1inequality2} \\
\delta_{3}&=|\boldsymbol{s}|+2t-\sqrt{|\boldsymbol{s}|^{2}+4t^{2}}>0 ,~~~~~~~\label{a1inequality3}
\end{align}
always hold. The proof is straightforward. For any two positive real numbers $x$ and $y$ with $0<x<1$ and $0<y<1$, one can always verify that the set of inequalities 
\[
x+\sqrt{x^{2}+y^{2}}>y,~~~~y+\sqrt{x^{2}+y^{2}}>x,~~\text{and}~~x+y>\sqrt{x^{2}+y^{2}}
\]
always hold. Now substituting $x=2t$ and $y=|\boldsymbol{s}|$, we obtain the three conditions (\ref{a1inequality1}, \ref{a1inequality2}, \ref{a1inequality3}). Hence, the ordering of the eigenvalues are given by, $q_{0}> q_{1}> q_{2}> q_{3}$. Therefore, the positivity of $\rho^{f}_{nu}$ implies that
\[
q_{3}\geq 0 \implies|\boldsymbol{s}|\leq 1-t. 
\]

Now, if the final state is rank four, then we have $q_{3}>0$ implying $|\boldsymbol{s}|<1-t$ whereas for rank-three final state we have $q_{3}=0$ implying $|\boldsymbol{s}|=1-t$.

\paragraph*{Estimation of the orthogonal Kraus operators:}
Next, we will evaluate the orthogonal Kraus operators associated with the non-unital Channels preserving the useful and universal condition while acting on one half of a Bell state $|\Phi_1\rangle$ following the approach mentioned in \cite{can1,QC}. 

After the action of an arbitrary non-unital channel, any eigenvector of the final state $\rho^{f}_{nu}$ can be written as 
\[
|\chi_{i}\rangle =\sum_{m,n=0}^{1}a^{(i)}_{mn}|m\rangle ~|n\rangle,~~~~~~~~~~~~~\text{with} ~~~~~~~~~~~~~ \sum_{m,n=0}^{1}|a^{(i)}_{mn}|^{2}=1.
\]

Now, one can define a $2\times 2$ complex matrix $A_{i}$ given by,
\begin{align}
A_{i}=\sqrt{2}\left( 
\begin{array}{cc}
    a_{00}^{(i)} & a_{01}^{(i)} \\
    \\
    a_{10}^{(i)} & a_{11}^{(i)}
\end{array}
\right),~~~~~~~~~~~~~~i=0,1,2,3.
\end{align}
It can be easily checked that 
\begin{align}
    |\chi_{i}\rangle =(A_{i}\otimes \mathbf{I})|\Phi_{1}\rangle = (\mathbf{I}\otimes A_{i}^{T})|\Phi_{1}\rangle, \label{krauscon}
\end{align}
where $A_{i}^{T}$ is the transposition of $A_i$. Hence, from Eq.(\ref{krauscon}), we can write
\begin{align}
    \rho^{f}_{nu} &=\sum_{i=0}^{3}q_{i}|\chi_{i}\rangle \langle \chi_{i}| \nonumber \\
    &=\sum_{i=0}^{3}(\mathbf{I}\otimes \sqrt{q_i} A_{i}^{T})|\Phi_{1}\rangle \langle \Phi_{1}|(\mathbf{I}\otimes \sqrt{q_i} A_{i}^{T})^{\dagger},
\end{align}
where $q_i$ are real positive numbers denoting the eigenvalues of $\rho^{f}_{nu}$ with $\sum_{i} q_i =1$; $\lbrace |\chi_{i}\rangle \rbrace$ are the orthonormal eigenvectors of $\rho^{f}_{nu}$. Here $\{ \sqrt{q_i} A_{i}^{T} \}$ denotes the Kraus operators of the channel $\Lambda_{nu}$. Note that the Kraus operators $\{ \sqrt{q_i} A_{i}^{T} \}$  correspond to completely positive trace preserving maps \cite{can1,QC}. 

Next, one can easily varify the following condition,
\begin{align}
    \text{Tr}\left[ \left(\sqrt{q_i} A_{i}^{T}\right)^{\dagger} \left( \sqrt{q_j} A_{j}^{T} \right) \right] = q_i \delta_{ij}, 
\end{align}
where $\delta_{ij}$ is the Kronecker delta function. In the above, we have used the fact that the eigenvectors $\lbrace |\chi_{i}\rangle \rbrace$ are orthonormal. Hence,  the Kraus operators $\{ \sqrt{q_i} A_{i}^{T} \}$ are orthogonal Kraus operators.

Now, using the expressions of eigenvalues and eigenvectors given by (\ref{eigenvalues}) and (\ref{eigennu}), and using the condition $|\boldsymbol{s}| < 1-t$ for non-unital channels with rank-four Choi states and the condition $|\boldsymbol{s}|= 1-t$ for non-unital channels with rank-three Choi states, one can evaluate the most general form of orthogonal Kraus operators $\{ \sqrt{q_i} A_{i}^{T} \}$ of non-unital channels that preserve the usefulness and universality condition.

\newpage

\section{Details of physical noise models and their impact on QT} \label{app2}
The details of some physically motivated noise models, as discussed in Sec. \ref{sec6}, are summarized in the below table. 

\begin{small}
\begin{center}
 \begin{tabular}{|c |c |c |c |} 
 \hline
 Type of noise & Nature & Kraus operators & Channel parameters  \\ [0.5ex] 
 \hline\hline
 & & &  \\
 Depolarizing (Markovian) & Unital & $M_{0}=\sqrt{1-p}$ $\mathbf{I}$,$M_{i}=\sqrt{\dfrac{p}{3}}\sigma_{i}$, $i=1,2,3 $ & $p\in [0,1]$   \\ 
  & & &  \\
 \hline
  & & &  \\
 Dephasing (Markovian) & Unital & $M_{0}=\sqrt{1-p}$ $\mathbf{I}$, $M_{1}=\sqrt{p}\sigma_{3}$ & $p\in [0,1]$  \\
  & & &  \\
 \hline
  & & &  \\
 ADC (Markovian) & Non-unital &$M_{0}= \left( \begin{array}{cc}
 1 & 0 \\
 0 & \sqrt{1-p(t)}
\end{array} \right)$ , $M_{1}= \left( \begin{array}{cc}
 0 & \sqrt{p(t)} \\
 0 & 0
\end{array} \right)$   & $p(t)=1- \exp \left[ -\gamma t \right]$  \\
 & & &  \\
 \hline
  & & & \\
 PLN (Markovian) & Unital & $M_{0}=\sqrt{1-p(t)}$ $\mathbf{I}$, $M_{1}=\sqrt{p(t)}\sigma_{3}$ & $p(t)=\exp \left[ -Gt \right]$ \\
  & & &  \\
 \hline
  & & &  \\
 OUN (Markovian) & Unital & $M_{0}=\sqrt{1-p(t)}$ $\mathbf{I}$, $M_{1}=\sqrt{p(t)}\sigma_{3}$ & $p(t)=\exp \left[-\dfrac{Gt}{2} \right]$ \\ 
  & & &  \\
 \hline
  & & &  \\
 Unruh (Markovian) & Non-unital & $M_{0}= \left( \begin{array}{cc}
 \cos r & 0 \\
 0 & 1
\end{array} \right)$ , $M_{1}= \left( \begin{array}{cc}
 0 & 0 \\
 \sin r & 0
\end{array} \right)$ & $r\in (0,\frac{\pi}{4}]$ \\
 & & &  \\
 \hline
  & & &  \\
 Depolarizing (Non-Markovian) & Unital & $M_{0}=\sqrt{[(1-3\alpha p)(1-p)]}~\mathbf{I}$,  & $0<\alpha \leq 1$,\\ 
  & & &  \\
  & & $M_{i}=\sqrt{[1+3\alpha(1-p)]\frac{p}{3}}~\sigma_{i}$, \quad $i=1,2,3$ & $0\leq p \leq \frac{1}{2}$\\
  & & & \\
 \hline
  & & &  \\
 Dephasing (Non-Markovian) & Unital & $M_{0}=\sqrt{(1-\alpha~p)(1-p)}\mathbf{I} $, & $0<\alpha \leq 1$, \\
 & & & \\
  & & $M_{3}=\sqrt{p\left[ 1+\alpha~(1-p)\right]}\sigma_{3}$ & $0\leq p \leq \frac{1}{2}$ \\
  & & & \\
 \hline
  & & &  \\
 ADC (Non-Markovian) & Non-unital & $M_{0}= \left( \begin{array}{cc}
 1 & 0 \\
 0 & \sqrt{1-p(t)}
\end{array} \right)$ , $M_{1}= \left( \begin{array}{cc}
 0 & \sqrt{p(t)} \\
 0 & 0
\end{array} \right)$ & $p(t)=1- \exp \left[\dfrac{-2R\gamma}{\omega_{0}\coth \left( \frac{g\omega_{0}t}{2} \right) +1} \right]$ \\
 & & &  \\
 \hline
  & & &  \\
 PLN (Non-Markovian) & Unital & $M_{0}=\sqrt{1-p(t)}$ $\mathbf{I}$, $M_{1}=\sqrt{p(t)}\sigma_{3}$ & $p(t)=\exp \left[\dfrac{Gt(gt+2)}{2(gt+1)^{2}} \right]$ \\
  & & &  \\
 \hline
  & & &  \\
 OUN (Non-Markovian) & Unital & $M_{0}=\sqrt{1-p(t)}$ $\mathbf{I}$, $M_{1}=\sqrt{p(t)}\sigma_{3}$ & $p(t)=\exp \left[\dfrac{-G(g^{-1}(e^{-gt}-1)+t)}{2} \right]$  \\  
  & & &  \\
 \hline
\end{tabular}
\end{center}
\end{small}

\begin{small}
\begin{center}
 \begin{tabular}{|c |c |c |c |} 
 \hline
 Type of noise & Nature & Kraus operators & Channel parameters  \\ [0.5ex] 
 \hline\hline
 & & &  \\
 RTN (Non-Markovian) & Unital & $M_{0}=\sqrt{1-p(t)}$ $\mathbf{I}$, $M_{1}=\sqrt{p(t)}\sigma_{3}$  & $p(t)= \exp \left[ -gt \right]\left( \cos(g\omega t)+\dfrac{\sin(g\omega t)}{\omega}\right) $ \\
  & & &  \\
 \hline
\end{tabular}
\end{center}
\end{small}



Next, we will present the maximal fidelity and fidelity deviation of the final states when one half of the state $|\Phi_1\rangle$ is subjected to the aforementioned channels. Note that if a state $\rho$ has $\text{det}(T_{\rho}) \geq 0$ (with $T_{\rho}$ being the correlation matrix of $\rho$), then that state is not useful for QT \cite{hs2,can2}. That is why, in the following, we will present the the expressions for maximal fidelity and fidelity deviation of the final states only for those ranges of channel parameters where the final states satisfy $\text{det}(T_{\rho}) < 0$. In the below tabel, `M' stands for Markovian channels and `NM' stands for non-Markovian channels.
\begin{center}
\begin{tabular}{|c |c |c |c |}
\hline
 & Maximal fidelity & Fidelity deviation & Whether the final \\[0.5 ex]
Type of noise & of the & of the &  state is useful \\[0.5 ex]
& final state & final state & for UQT for some\\[0.5 ex]
 & & & specific ranges of the \\[0.5ex]
  & & & channel parameters \\[0.5ex]
\hline \hline
Depolarizing & $1-\dfrac{2p}{3}$ & 0 & Yes \\[2.5 ex]
 (M) & $>\dfrac{2}{3}$ when $0\leq p<\dfrac{1}{2}$ &  &\\[2.5 ex]
\hline
Dephasing & $ \dfrac{2p+1}{3}>\dfrac{2}{3}$  when $\dfrac{1}{2}< p < 1$; &  $\dfrac{2(1-p)}{3\sqrt{5}} \neq 0$  when  $\dfrac{1}{2}< p < 1$; & \\[2.5 ex]
(M)    & $\dfrac{2}{3}    \hspace{0.75cm}$ when $p= \dfrac{1}{2}$; & $\dfrac{1}{3\sqrt{5}} \neq 0  \hspace{0.1cm}$ when  $p = \dfrac{1}{2}$; & No\\[2.5 ex]
 &    $1-\dfrac{2p}{3}>\dfrac{2}{3}$   when $0< p < \dfrac{1}{2}$. & $\dfrac{2p}{3\sqrt{5}} \neq 0$ when $0< p < \dfrac{1}{2}$. & \\[2.5 ex]
\hline
ADC (M) & $\dfrac{1}{2}+\dfrac{\exp\left[-\gamma t\right] + 2 \exp\left[-\frac{\gamma t}{2}\right]}{6}$ & $\dfrac{\exp \left[-\frac{\gamma t}{2} \right] - \exp \left[-\gamma t\right]}{3\sqrt{5}}$ &   \\ [4ex]
& $>\dfrac{2}{3}$ \, if and only if  & $\neq 0$ \, when   & No \\[2.5 ex]
&  \, $0 < t < \dfrac{-ln(3-2\sqrt{2})}{\gamma}$ & $0 < t < \dfrac{-ln(3-2\sqrt{2})}{\gamma}$ & \\[2.5 ex]
\hline
PLN (M) & $\dfrac{2e^{-Gt}+1}{3}>\dfrac{2}{3}$   when $0\leq t < \dfrac{ln~2}{G}$; & $\dfrac{2(1-e^{-Gt})}{3\sqrt{5}}\neq 0$   when $0\leq t < \dfrac{ln~2}{G}$; &  \\[2.5 ex]
 &   $\dfrac{2}{3}  \hspace{1.7cm}$ when $t= \dfrac{ln~2}{G}$; & $\dfrac{1}{3\sqrt{5}}\neq 0  \hspace{1cm}$ when  $t= \dfrac{ln~2}{G}$; & No\\[2.5 ex]
  &  $1-\dfrac{2e^{-Gt}}{3}>\dfrac{2}{3}$ when $\dfrac{ln~2}{G} < t < \infty$. & $\dfrac{2e^{-Gt}}{3\sqrt{5}}\neq 0$ when $\dfrac{ln~2}{G} < t < \infty$. & \\[2.5 ex]
\hline
\end{tabular}
\end{center}

\begin{center}
\begin{tabular}{|c |c |c |c |}
\hline
 & Maximal fidelity & Fidelity deviation & Whether the \\[0.5 ex]
Type of noise & of the & of the &   final state \\[0.5 ex]
& final state & final state &  is useful \\[0.5 ex]
 & & & for UQT  \\[0.5ex]
  & & & for some \\[0.5ex]
   & & & specific ranges  \\[0.5ex]
    & & & the channel \\[0.5ex]
     & & & parameters \\[0.5ex]
\hline \hline
OUN (M) &  $\dfrac{2e^{-\frac{Gt}{2}}+1}{3}>\dfrac{2}{3}$  & $\dfrac{2(1-e^{-\frac{Gt}{2}})}{3\sqrt{5}}\neq 0 $ & \\[2.5 ex]
 &  when $0\leq t < \dfrac{ln~2}{G}$; &  when $0\leq t < \dfrac{2ln~2}{G}$; & \\[2.5 ex]
  &  $\dfrac{2}{3} \hspace{0.5cm}$ when $t= \dfrac{2ln~2}{G}$; & $\dfrac{1}{3\sqrt{5}}\neq 0 \hspace{0.2cm}$ when $t= \dfrac{2ln~2}{G}$; & No \\[2.5 ex]
 &   $1-\dfrac{2e^{-\frac{Gt}{2}}}{3}>\dfrac{2}{3}$ & $\dfrac{2e^{-\frac{Gt}{2}}}{3\sqrt{5}}\neq 0 $  & \\[2.5 ex]
  &   when $\dfrac{2ln~2}{G} < t < \infty$. &  when $\dfrac{2ln~2}{G} < t < \infty$. & \\[2.5 ex]
\hline
Unruh(M) & $ \dfrac{1}{2}+\dfrac{\cos^{2} r +2 \cos r}{6}$ & $\dfrac{\cos r - \cos^{2} r}{3\sqrt{5}}$&  \\[2.5ex]
 & $ >\dfrac{2}{3}$ \quad if and only if \, \, $0<r\leq \dfrac{\pi}{4}$ & $\neq 0$ \quad when \, $0<r\leq \dfrac{\pi}{4}$& No \\[2.5ex]
\hline
Depolarizing & $1-\dfrac{2p}{3}\left[ 1+3\alpha ~(1-p)\right]$ &  &  \\[2.5ex]
 (NM) & $> \dfrac{2}{3}$ \quad when & 0 & Yes \\[2.5ex]
 & $0\leq p<\dfrac{1}{6\alpha}\left[1+3\alpha-\sqrt{1+9\alpha ^{2}}\right]$, & & \\[2.5ex]
 & $0<\alpha \leq 1$ & & \\[2.5ex]
\hline
Dephasing & $1-\dfrac{2p\left[ 1+\alpha~(1-p)\right]}{3}>\dfrac{2}{3}$  & $\dfrac{2\left[1-(1-p)(1-\alpha~p)\right]}{3\sqrt{5}}\neq 0$  &  \\[2.5ex]
(NM) &  when &  when &  \\[2.5ex]
  & $0\leq p<~ \dfrac{1}{2\alpha}\left[1+\alpha-\sqrt{1+\alpha ^{2}}\right]$; & $ 0\leq p<~ \dfrac{1}{2\alpha}\left[1+\alpha-\sqrt{1+\alpha ^{2}}\right]$; & \\[2.5ex]
  & $\dfrac{2}{3}$ when $p= \dfrac{1}{2\alpha}\left[1+\alpha-\sqrt{1+\alpha ^{2}}\right]$; & $\dfrac{\sqrt{5}}{3}$ when $p= \dfrac{1}{2\alpha}\left[1+\alpha-\sqrt{1+\alpha ^{2}}\right]$; & \\[2.5ex]
  &  $\dfrac{1}{3}\left[ 1+2p(1+\alpha -\alpha~p)\right]>\dfrac{2}{3}$ & $\dfrac{2\left[1-(p)(1+\alpha-\alpha~p)\right]}{3\sqrt{5}}\neq 0$  & \\[2.5ex]
  &  when &  when & No \\[2.5ex]
  & $\dfrac{1}{2\alpha}\left[1+\alpha-\sqrt{1+\alpha ^{2}}\right]<p<~\dfrac{1}{2}$;  & $\dfrac{1}{2\alpha}\left[1+\alpha-\sqrt{1+\alpha ^{2}}\right]<p<~\dfrac{1}{2}$;  & \\[2.5ex]
  & $0<\alpha \leq 1$  & $0<\alpha \leq 1$  & \\[2.5ex]
 \hline

\end{tabular}
\end{center}

\begin{center}
\begin{tabular}{|c |c |c |c |}
\hline
 & Maximal fidelity & Fidelity deviation & Whether the \\[0.5 ex]
Type of noise & of the & of the &   final state \\[0.5 ex]
& final state & final state &  is useful \\[0.5 ex]
 & & & for UQT  \\[0.5ex]
  & & & for some \\[0.5ex]
   & & & specific ranges  \\[0.5ex]
    & & & the channel \\[0.5ex]
     & & & parameters \\[0.5ex]
\hline 
 \hline
ADC (NM) & $\dfrac{1}{2}+\dfrac{2\sqrt{1-p(t)}+1-p(t)}{6}$  & $\dfrac{\sqrt{1-p(t)}\left(1-\sqrt{1-p(t)}\right)}{3\sqrt{5}}\neq 0$ &  \\
[2.5ex]
 & $>\dfrac{2}{3}$ \quad if and only if &  when & No \\
[2.5ex]
 & $0<p(t)\leq 1-exp\left[ \dfrac{-2R\gamma}{\omega_{0}+1}\right]$ & $0<p(t)\leq 1-exp\left[ \dfrac{-2R\gamma}{\omega_{0}+1}\right]$ & \\[2.5ex]
 & $<2(\sqrt{2}-1)$ & $<2(\sqrt{2}-1)$ & \\[2.5ex]
 & where & where & \\[2.5ex]
  & $p(t)=1- \exp \left[\dfrac{-2R\gamma}{\omega_{0}\coth \left( \frac{g\omega_{0}t}{2} \right) +1} \right]$ & $p(t)=1- \exp \left[\dfrac{-2R\gamma}{\omega_{0}\coth \left( \frac{g\omega_{0}t}{2} \right) +1} \right]$ & \\[2.5ex]
\hline
PLN (NM) & $ \dfrac{2p(t)+1}{3}>\dfrac{2}{3}$  when $\dfrac{1}{2}< p(t) < 1$, &  $\dfrac{2(1-p(t))}{3\sqrt{5}} \neq 0$  when  $\dfrac{1}{2}< p(t) < 1$, & \\[2.5 ex]
   & $\dfrac{2}{3}    \hspace{0.75cm}$ when $p(t)= \dfrac{1}{2}$, & $\dfrac{1}{3\sqrt{5}} \neq 0  \hspace{0.1cm}$ when  $p(t) = \dfrac{1}{2}$, & No\\[2.5 ex]
 &    $1-\dfrac{2p(t)}{3}>\dfrac{2}{3}$   when $0< p(t) < \dfrac{1}{2}$, & $\dfrac{2p(t)}{3\sqrt{5}} \neq 0$ when $0< p(t) < \dfrac{1}{2}$ & \\[2.5 ex]
 & where  & where &\\[2.5 ex]
 & $p(t)= \exp \left[\dfrac{Gt(gt+2)}{2(gt+1)^{2}} \right]$ &  $p(t)= \exp \left[\dfrac{Gt(gt+2)}{2(gt+1)^{2}} \right]$ & \\[2.5 ex]
 \hline
OUN (NM) & $ \dfrac{2p(t)+1}{3}>\dfrac{2}{3}$  when $\dfrac{1}{2}< p(t) < 1$, &  $\dfrac{2(1-p(t))}{3\sqrt{5}} \neq 0$  when  $\dfrac{1}{2}< p(t) < 1$, & \\[2.5 ex]
   & $\dfrac{2}{3}    \hspace{0.75cm}$ when $p(t)= \dfrac{1}{2}$, & $\dfrac{1}{3\sqrt{5}} \neq 0  \hspace{0.1cm}$ when  $p(t) = \dfrac{1}{2}$, & No\\[2.5 ex]
 &    $1-\dfrac{2p(t)}{3}>\dfrac{2}{3}$   when $0< p(t) < \dfrac{1}{2}$, & $\dfrac{2p(t)}{3\sqrt{5}} \neq 0$ when $0< p(t) < \dfrac{1}{2}$ & \\[2.5 ex]
 & where  & where & \\[2.5 ex]
 & $p(t)= \exp \left[\dfrac{-G(g^{-1}(e^{-gt}-1)+t)}{2} \right]$ &  $p(t)= \exp \left[\dfrac{-G(g^{-1}(e^{-gt}-1)+t)}{2} \right]$ &\\[2.5 ex]
\hline
 RTN (NM) & $ \dfrac{2p(t)+1}{3}>\dfrac{2}{3}$  when $\dfrac{1}{2}< p(t) < 1$, &  $\dfrac{2(1-p(t))}{3\sqrt{5}} \neq 0$  when  $\dfrac{1}{2}< p(t) < 1$, & \\[2.5 ex]
   & $\dfrac{2}{3}    \hspace{0.75cm}$ when $p(t)= \dfrac{1}{2}$, & $\dfrac{1}{3\sqrt{5}} \neq 0  \hspace{0.1cm}$ when  $p(t) = \dfrac{1}{2}$, & No\\[2.5 ex]
 &    $1-\dfrac{2p(t)}{3}>\dfrac{2}{3}$   when $0< p(t) < \dfrac{1}{2}$, & $\dfrac{2p(t)}{3\sqrt{5}} \neq 0$ when $0< p(t) < \dfrac{1}{2}$ & \\[2.5 ex]
 & where  & where & \\[2.5 ex]
 & $p(t)= \exp \left[ -gt \right]\left( \cos(g\omega t)+\dfrac{\sin(g\omega t)}{\omega}\right) $ &  $p(t)= \exp \left[ -gt \right]\left( \cos(g\omega t)+\dfrac{\sin(g\omega t)}{\omega}\right)$ & \\[2.5 ex]
 \hline

\end{tabular}
\end{center}

\end{widetext}
\end{document}